\renewcommand{\arraystretch}{1.1}
\theoremstyle{plain}
\newtheorem{thm}{Theorem}[section]
\newtheorem{cor}[thm]{Corollary}
\newtheorem{lem}[thm]{Lemma}
\newtheorem{prop}[thm]{Proposition}
\theoremstyle{definition}
\newtheorem{defn}[thm]{Definition}
\newtheorem{rem}[thm]{Remark}
\newtheorem{ex}[thm]{Example}
\numberwithin{equation}{section}
\newcommand{\Fq}{\mathbb{F}_{q}}
\newcommand{\E}{\mathbb{E}}
\newcommand{\F}{{\mathbb F}}
\newcommand{\K}{\mathbb{K}}
\newcommand{\N}{\mathbb{N}}
\newcommand{\Tr}{{\rm Tr}}
\newcommand{\Span}{{\rm Span}}
\begin{document}

\title{Generalized Spectral Bound for Quasi-Twisted Codes}
\maketitle
\pagestyle{plain}

\author
{ {\scshape \begin{center} Buket \"{O}zkaya \end{center}}
\vspace{0.1cm} \small
\begin{center} Institute of Applied Mathematics,\\ Middle East Technical University, 06800 Ankara, Turkey\\
ozkayab@metu.edu.tr\\
\end{center}
}

\begin{abstract}
Semenov and Trifonov \cite{ST} developed a spectral theory for quasi-cyclic codes and formulated a BCH-like minimum distance bound. Their approach was generalized by Zeh and Ling \cite{LZ}, by using the HT bound.  The first spectral bound for quasi-twisted codes appeared in \cite{ELLOT}, which generalizes Semenov-Trifonov and Zeh-Ling bounds, but its overall performance was observed to be worse than the Jensen bound. More recently, an improved spectral bound for quasi-cyclic codes was proposed in \cite{LELO}, which outperforms the Jensen bound in many cases. In this paper, we adopt this approach to quasi-twisted case and we show that this new generalized spectral bound provides tighter lower bounds on the minimum distance compared to the Jensen and Ezerman et. al. bounds.
\end{abstract}

Keywords: quasi-twisted code, concatenated code, spectral bound

\section{Introduction}
\label{intro}
Thanks to their nice algebraic structure, several minimum distance bounds had been proposed for cyclic codes. Some of the well-known bounds are the Bose-Chaudhuri-Hocquenghem (BCH) bound \cite{BC,H}, the Hartmann-Tzeng (HT) bound \cite{HT} and the Roos bound \cite{R2}. All of these bound utilize the so-called zero set of a given cyclic code to estimate the minimum distance. On the other hand, constacyclic codes generalize cyclic codes in a natural way that shift invariance is accompanied with a multiplication by a nonzero constant. The aforementioned minimum distance bounds based on the zero sets have been generalized to constacyclic setup in \cite{RZ}.
 
Quasi-twisted (QT) codes form another important class of codes, which contains cyclic codes, quasi-cyclic (QC) codes and constacyclic codes as special subclasses. We refer the interested reader to \cite{Y,SZ} for an introduction to their algebraic structure. Furthermore, certain families of QT codes are also known to be asymptotically good \cite{C,DH,WS} and they yield good parameters \cite{AGOSS1,AGOSS2,QSS,SQS}.

Unfortunately, the study on the minimum distance of QT or QC codes is not adequate. Jensen \cite{J} and Lally \cite{L2} proved two bounds for QC codes, which hold for QT codes as well, and they use different aspects of the algebraic structure of QC codes. Semenov and Trifonov studied the spectral analysis of QC codes \cite{ST}, based on the work of Lally and Fitzpatrick in \cite{LF}, and formulated a BCH-like minimum distance bound. Their approach was generalized by Zeh and Ling in \cite{LZ}, by using the HT bound. The spectral analysis of QT codes appeared in \cite{ELLOT}, but the derived bound was outperformed by the Jensen bound. An improvement on this spectral bound has been provided in \cite{LELO} recently, which is valid for QC codes and beats the Jensen bound and the Ezerman et. al. bound.

In this paper, we investigate the spectral theory for QT codes further, by following the steps in \cite{ELLOT,LELO}. We derive a general spectral bound which contains the previous spectral bounds as special cases. For this, we provide a different proof than the ones in \cite{ELLOT,LELO}. Moreover, we push the general spectral bound to the largest possible extent and show that it holds for any subset of eigenvalues. Finally, the numerical examples and the simulations over random choices of QT codes are presented at the end.

This paper is organized as follows. Section \ref{basics} recalls some required background on the algebraic structure of constacyclic and QT codes. A generalized spectral bound on the minimum distance of QT codes is formulated and proven in Section \ref{bound sect}. Section \ref{res sect} contains explicit examples of QT codes and presents the performances of the generalized spectral bound, the Jensen bound and the old spectral bound in terms of sharpness and rank. All these computations were carried out using the computer algebra software \textsc{magma} \cite{BCP}.
 
\section{Background}\label{basics}

This section provides the necessary material on the minimum distance bounds for constacyclic and quasi-twisted codes, based on their algebraic properties. Let $\Fq$ denote the finite field with $q$ elements, where $q$ is a prime power, and let $\Fq^*=\Fq\setminus \{0\}$. For $n\geq 1$, a linear code $C\in \Fq^n$ is called an $[n,k,d]_q$-code if it is a $k$-dimensional subspace of $\Fq^n$ with minimum (Hamming) distance $d=d(C):=\min\{\mbox{wt}(\mathbf{c}) : 0\neq \mathbf{c}\in C\}$. We follow the notation and presentation in \cite{ELLOT}.

\subsection{Constacyclic codes} \label{consta sect}\hfill

Throughout the paper, let $m$ be a positive integer coprime with $q$. For a fixed element $\lambda \in \Fq^*$, a linear code $C\subseteq \Fq^m$ is called a $\lambda$-{\it constacyclic} code if it is invariant under the $\lambda$-constashift of codewords, that is, $(c_0,\ldots,c_{m-1}) \in C$ implies $(\lambda c_{m-1},c_0,\ldots,c_{m-2}) \in C$. In particular, if $\lambda = 1$ or $q=2$, then $C$ is a cyclic code. 

We define the quotient ring $R:=\Fq[x]/I$, where $I=\langle x^m -\lambda \rangle$. For an element $\mathbf{a}\in \Fq^m$, we associate an element of $R$ via the following isomorphism:
\begin{equation}\begin{array}{rcl} \label{identification-1}
\phi: \ \F_q^{m} & \longrightarrow & R  \\
\mathbf{a}=(a_0,\ldots,a_{m-1}) & \longmapsto & a(x):= a_0+\cdots + a_{m-1}x^{m-1}.
\end{array}
\end{equation}
Clearly, the $\lambda$-constashift in $\F_q^{m}$ corresponds to multiplication by $x$ in $R$. Therefore, a $\lambda$-constacyclic code $C\subseteq \Fq^m$ can be viewed as an ideal of $R$. Since $R$ is a principal ideal ring, there exists a unique monic $g(x)\in R$ such that $g(x)\mid x^m-\lambda$ and $C=\langle g(x)\rangle$. The polynomial $g(x)$ is called the {\it generator polynomial} of $C$, whereas $h(x)=\frac{x^m-\lambda}{g(x)}\in R$ is called the {\it check polynomial} of $C$. For $p\geq 1$, let $\mathbf{0}_p$ denote the all-zero vector of length $p$. We have $C=\{\mathbf{0}_m\}$ if and only if $g(x)=x^m-\lambda$. In this case, we assume throughout that $d(C)=\infty$.

Let $r$ be the smallest divisor of $q-1$ satisfying $\lambda^r=1$ and let $\alpha$ be a primitive $rm^{\rm th}$ root of unity with $\alpha^m=\lambda$. Then, $\xi:=\alpha^r$ is a primitive $m^{\rm th}$ root of unity and the roots of $x^m-\lambda$ are $\alpha, \alpha\xi, \ldots, \alpha\xi^{m-1}$. Henceforth, let $\Omega :=\{\alpha\xi^k : 0\leq k \leq m-1\}=\{\alpha^{1+kr} : 0\leq k \leq m-1\}$ be the set of all $m^{\rm th}$ roots of $\lambda$ and let $\F$ be the smallest extension of $\Fq$ that contains $\Omega$ (equivalently, $\F=\Fq(\alpha)$ so that $\F$ is the splitting field of $x^m-\lambda$). Given the $\lambda$-constacyclic code $C=\langle g(x)\rangle$, the set 
$$L:=\{\alpha\xi^k\ :\ g(\alpha\xi^k)=0\}\subseteq \Omega,$$ is called the {\it zero set} of $C$. The power set $\mathcal{P}(L)$ of $L$ is called the {\it defining set} of $C$. We have $L=\emptyset$ if and only if $C=\langle 1\rangle = \Fq^m$. Note that $\alpha\xi^k\in L$ implies $\alpha^q\xi^{qk}\in L$, for each $k$, where $\alpha^q\xi^{qk}=\alpha\xi^{k'}$ with $k'=\frac{q-1}{r}+qk \mod m$. A nonempty subset $E\subseteq\Omega$ is said to be {\it consecutive} if there exist integers $e,n$ and $\delta$ with $e\geq 0,\delta \geq 2, n> 0$ and $\gcd(m,n)=1$ such that
\begin{equation} \label{cons zero set}
E:=\{\alpha\xi^{e+zn}\ :\ 0\leq z\leq \delta-2\}\subseteq\Omega.
\end{equation}

Let $\mathcal{P}(\Omega)$ be the power set of $\Omega$. Note that any element $P\in\mathcal{P}(\Omega)$ is the zero set of some $\lambda$-constacyclic code $D_P\subseteq\F^m$ since $x^m-\lambda$ splits into linear factors over $\F$.
Let $C$ be a nontrivial $\lambda$-constacyclic code of length $m$ over some subfield of $\F$ with zero set $L\subseteq \Omega$. Then, for any $P\subseteq L$, $C$ is a subfield subcode of $D_P$ and therefore we have $d(C)\geq d(D_P)$. 
We define a {\it defining set bound} to be a member of a chosen family $\mathcal{B}(C):=\{(P, d_P)\} \subseteq \mathcal{P}(\Omega) \times (\N \cup \{\infty\})$ such that, for any $(P, d_P)\in \mathcal{B}(C)$, $P\subseteq L$ implies $d(C) \geq d(D_P) \geq d_P$. Note that any union of defining set bounds is again a defining set bound. We set 
\begin{equation*}\mathcal{B}_1(C) := \{(P, d(D_P)) : D_P\subseteq\F^m \mbox{\ has\ zero\ set\ } P, \mbox{\ for\ all\ } P\subseteq L\}.
\end{equation*}
In particular, when $P=L=\Omega$, we have $D_{\Omega}=\{\mathbf{0}_m\}$ over $\F$ with $d(D_{\Omega})=\infty$ and consequently, we include $(\Omega,\infty)$ in every collection $\mathcal{B}(C)$ as a convention when $L=\Omega$.

If we choose $$\mathcal{B}_2(C) := \{(E, |E|+1) : E\subseteq L \mbox{\ is\ consecutive}\},$$ then we obtain the BCH bound, where $E$ is of the form given in (\ref{cons zero set}). Similarly, we formulate the HT bound as \begin{equation*}\mathcal{B}_3(C) := \{(D, \delta + s) :
D = \{\alpha\xi^{e + z  n_1 + y  n_2} : 0\leq z \leq \delta-2, 0 \leq y\leq s\} \subseteq L\}, \end{equation*} for integers $ e \geq 0$, $\delta \geq 2$ and positive integers $s, n_1$ and $n_2$ such that $\gcd(m, n_1) = 1$ and $\gcd(m, n_2) < \delta$. 

We now present the Roos bound on the minimum distance of a given $\lambda$-constacyclic code (see \cite[Theorem 2]{R2} for the original version by C. Roos for cyclic codes). 

\begin{thm}\cite[Theorem 6]{RZ}. \label{Roos}
Let $N$ and $M$ be two nonempty subsets of $\Omega$. If there exists a consecutive set $M'$ containing $M$ such that $|M'| \leq |M| + d_N -2$, then we have $d_{MN}\geq |M| + d_N -1$ where $MN:=\displaystyle\frac{1}{\alpha}\bigcup_{\varepsilon\in M} \varepsilon N$.
\end{thm}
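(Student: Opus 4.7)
My plan is to follow Roos' original column-independence strategy, adapted to the $\lambda$-constacyclic setting. It suffices to show that for $w:=|M|+d_N-2$, any $w$ columns of the parity-check matrix $H=[\beta^i]_{\beta\in MN,\, 0\le i\le m-1}$ of $D_{MN}$ are linearly independent over $\F$. Fix positions $S=\{s_1,\dots,s_w\}\subseteq\{0,\dots,m-1\}$ and suppose $\mathbf{x}\in\F^w$ satisfies $H_S\mathbf{x}=0$; after restricting to the support of $\mathbf{x}$ I may assume $w':=|\mathrm{supp}(\mathbf{x})|\le w$ and that every entry of $\mathbf{x}$ is nonzero. Setting $\tilde{x}_k=x_k/\alpha^{s_k}$, the $MN$-parity conditions rewrite as the bilinear identity
\[
V_M\,T\,V_N^{\top}=0,
\]
where $V_M=[\mu^{s_k}]_{\mu\in M,\,1\le k\le w'}$ and $V_N=[\gamma^{s_k}]_{\gamma\in N,\,1\le k\le w'}$ are generalized Vandermonde matrices and $T=\mathrm{diag}(\tilde{x}_1,\dots,\tilde{x}_{w'})$ is invertible. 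The goal is to force $\mathbf{x}=0$.

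Next, I would derive matching upper and lower bounds on $\mathrm{rank}(V_M)$. Since $T$ is invertible, the row space of $V_MT$ lies in $\ker V_N\subseteq\F^{w'}$, so $\mathrm{rank}(V_M)\le\dim\ker V_N$. Any $\mathbf{y}\in\ker V_N$ corresponds, via the embedding $S\hookrightarrow\{0,\dots,m-1\}$, to a polynomial $\sum_k y_k X^{s_k}$ in $\F[X]/(X^m-\lambda)$ that vanishes on $N$, i.e., to a codeword of $D_N$ supported in $S$; hence nonzero $\mathbf{y}$ have Hamming weight at least $d(D_N)\ge d_N$, and the Singleton bound applied to $\ker V_N$ yields $\mathrm{rank}(V_M)\le w'-d_N+1$. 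For the lower bound, writing $M'=\{\alpha\xi^{e+jn}:0\le j\le|M'|-1\}$ with $\gcd(m,n)=1$ and $\beta:=\xi^n$ (of order $m$), the matrix $V_{M'}$ factors as $\widetilde V\cdot\mathrm{diag}(\alpha^{s_k}\xi^{e s_k})$ with $\widetilde V_{j,k}=(\beta^{s_k})^j$; since the elements $\beta^{s_k}$ are distinct, one can exhibit a nonsingular Vandermonde square minor of $\widetilde V$ of maximal size, yielding $\mathrm{rank}(V_{M'})=\min(|M'|,w')$. Deleting the $|M'|-|M|$ rows indexed by $M'\setminus M$ can drop the rank by at most that many, so
\[
\mathrm{rank}(V_M)\ge\min(|M'|,w')-(|M'|-|M|).
\]

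Finally, a case analysis on whether $|M'|\ge w'$ or $|M'|<w'$ turns the lower bound into $w'-|M'|+|M|$ or $|M|$, respectively; combining with the upper bound $\mathrm{rank}(V_M)\le w'-d_N+1$ produces either $|M'|\ge|M|+d_N-1$ or $w'\ge|M|+d_N-1$, both contradicting the hypotheses $|M'|\le|M|+d_N-2$ and $w'\le w=|M|+d_N-2$. Hence $\mathbf{x}=0$, which establishes the column-independence and therefore $d(D_{MN})\ge|M|+d_N-1$. The most delicate step is the rank computation for $V_{M'}$: a generalized Vandermonde in arbitrary distinct row-elements and arbitrary distinct exponents can easily be singular (e.g.\ if two exponent differences coincide modulo the order of some row-element), so the argument crucially exploits that $M'$ is consecutive, making every row of $\widetilde V$ a pure power of the single element $\beta=\xi^n$ of full order $m$ and thereby reducing a suitable maximal square minor to an honest Vandermonde in the distinct values $\beta^{s_k}$.
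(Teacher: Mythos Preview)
The paper does not supply its own proof of this theorem: it is quoted verbatim as \cite[Theorem~6]{RZ} and used as an input to the later spectral bounds, so there is nothing in the present paper to compare your argument against. Your proposal is the standard Roos column-independence argument, correctly transported from the cyclic to the $\lambda$-constacyclic setting via the substitution $\tilde{x}_k=x_k/\alpha^{s_k}$ (which is exactly what absorbs the extra factor of $\alpha$ in the definition of $MN$ and in $\Omega=\{\alpha\xi^k\}$). The bilinear factorisation $V_M\,T\,V_N^{\top}=0$, the Singleton upper bound on $\mathrm{rank}(V_M)$ via $\ker V_N\hookrightarrow D_N|_S$, the Vandermonde lower bound on $\mathrm{rank}(V_{M'})$ using that $\beta=\xi^n$ has full order $m$, and the final two-case contradiction are all sound; this is essentially the proof one finds in \cite{R2,RZ}. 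One cosmetic point: when you write ``$\mathrm{rank}(V_M)\le\dim\ker V_N$'' you should be explicit that you mean the left kernel $\{y\in\F^{w'}:yV_N^{\top}=0\}$, since that is the space containing the rows of $V_MT$; your subsequent identification of that kernel with codewords of $D_N$ supported in $S$ makes the intended reading clear.
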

If $N$ is consecutive like in (\ref{cons zero set}), then we obtain the following.
\begin{cor}\cite[Corollary 1]{RZ},\cite[Corollary 1]{R2} \label{Roos2}
Let $N, M$ and $M'$ be as in Theorem \ref{Roos}, with $N$ consecutive. Then $|M'| < |M| + |N|$ implies $d_{MN}\geq |M| + |N|$.
\end{cor}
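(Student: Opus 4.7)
The plan is to deduce Corollary~\ref{Roos2} directly from the Roos bound of Theorem~\ref{Roos}, using only the additional information that $N$ is consecutive. The key observation is that once $N$ has the form~(\ref{cons zero set}), the BCH-type family $\mathcal{B}_2$ already supplies a strong lower bound on $d_N$ for free, which is exactly what is needed to convert the hypothesis $|M'|<|M|+|N|$ into the hypothesis of Theorem~\ref{Roos}.

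First I would record that, since $N\subseteq\Omega$ is consecutive of cardinality $|N|$, the $\lambda$-constacyclic code $D_N\subseteq\F^m$ whose zero set is $N$ falls under the BCH bound applied to itself: $(N,|N|+1)\in\mathcal{B}_2(D_N)$, hence
\begin{equation*}
d_N \;:=\; d(D_N) \;\geq\; |N|+1.
\end{equation*}
This is the only nontrivial input beyond Theorem~\ref{Roos}.

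Next I would substitute this inequality into the hypothesis of Theorem~\ref{Roos}. The assumed strict inequality $|M'|<|M|+|N|$ is the same as $|M'|\leq |M|+|N|-1$, and combining with $d_N\geq |N|+1$ yields
\begin{equation*}
|M'| \;\leq\; |M|+|N|-1 \;=\; |M|+(|N|+1)-2 \;\leq\; |M|+d_N-2.
\end{equation*}
Since $M\subseteq M'$ with $M'$ consecutive and $N$ nonempty, the hypotheses of Theorem~\ref{Roos} are met for the same triple $M,M',N$. Applying the theorem delivers $d_{MN}\geq |M|+d_N-1$, and replacing $d_N-1$ by its lower bound $|N|$ gives $d_{MN}\geq |M|+|N|$, as required.

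There is essentially no obstacle here: all the technical work has been absorbed into Theorem~\ref{Roos}, and the corollary is a one-line specialization. The only care needed is bookkeeping on the cardinality of a consecutive set, namely the convention that a set of the form in~(\ref{cons zero set}) contains $\delta-1$ elements so that the BCH estimate reads $d_N\geq |N|+1$ rather than $d_N\geq |N|$; this is precisely what converts the strict inequality in the corollary's hypothesis into the weak inequality required by the theorem.
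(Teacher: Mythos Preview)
Your argument is correct and is exactly the standard derivation: the paper does not supply its own proof of Corollary~\ref{Roos2} (it simply cites \cite[Corollary~1]{RZ} and \cite[Corollary~1]{R2}), and the proof in those references proceeds precisely as you do, namely by feeding the BCH estimate $d_N\geq |N|+1$ for consecutive $N$ into Theorem~\ref{Roos}.
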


\begin{rem}\label{Roos remark}
The special case $M=\{\alpha\}$ gives the BCH bound for the associated constacyclic code (see \cite[Corollary 2]{RZ}). By taking $M'=M$, we obtain the HT bound (see \cite[Corollary 3]{RZ}). 
\end{rem}

Let $C$ be a nontrivial $\lambda$-constacyclic code of length $m$ over some subfield of $\F$ with zero set $L\subseteq \Omega$. Then, the Roos bound corresponds to the choice $$\mathcal{B}_4(C):=\{(MN,|M|+d_N-1) : \exists M'\subseteq\Omega\  \mbox{consecutive\ such\ that\ }  M'\supseteq M  \mbox{\ with\ } |M'|\leq |M|+d_N-2\},$$ for any $\emptyset\neq MN\subseteq L$ with $MN=\frac{1}{\alpha}\bigcup_{\varepsilon\in M} \varepsilon N$. 

\subsection{Quasi-twisted codes} \label{QT sect}\hfill

Let $\ell$ be a positive integer. A linear code $C\subseteq\Fq^{m\ell}$ is called a $\lambda$-quasi-twisted ($\lambda$-QT) code of index $\ell$ if it is invariant under the $\lambda$-constashift of codewords by $\ell$ positions and $\ell$ is the least positive integer satisfying this property. That is, $\mathbf{c}=(c_{00} ,\ldots, c_{0,\ell-1}, \ldots, c_{m-1,0}, \ldots, c_{m-1,\ell-1})\in C$ implies $(\lambda c_{m-1,0}, \ldots, \lambda c_{m-1,\ell-1}, c_{00} ,\ldots, c_{0,\ell-1}, \ldots, c_{m-2,0}, \ldots, c_{m-2,\ell-1})\in C$. In particular, if $\ell=1$, then $C$ is a $\lambda$-constacyclic code, and if $\lambda = 1$ or $q=2$, then $C$ is a QC code of index $\ell$. If we view a codeword $\mathbf{c}\in C$ as an $m \times \ell$ array
\begin{equation}\label{array}
\mathbf{c}=\left(
  \begin{array}{ccc}
    c_{00} & \ldots & c_{0,\ell-1} \\
    \vdots &  & \vdots \\
    c_{m-1,0} & \ldots & c_{m-1,\ell-1} \\
  \end{array}
\right),\end{equation} then being invariant under $\lambda$-constashift by $\ell$ positions in $\Fq^{m\ell}$ corresponds to being closed under row $\lambda$-constashift in $\Fq^{m\times\ell}$.

The isomorphism $\phi$ in (\ref{identification-1}) extends naturally to
\begin{equation}
\begin{array}{lll} \label{identification-2}
\Phi: \hspace{2cm} \F_q^{m\ell} & \longrightarrow & R^\ell  \\
\mathbf{c}=\left(
  \begin{array}{ccc}
    c_{00} & \ldots & c_{0,\ell-1} \\
    \vdots &  & \vdots \\
    c_{m-1,0} & \ldots & c_{m-1,\ell-1} \\
  \end{array}
\right) & \longmapsto & \hspace{-2pt}\mathbf{c}(x)\\
\end{array}\\
\end{equation}
such that to an element $\mathbf{c}\in \Fq^{m\times \ell} \simeq \Fq^{m\ell}$ represented as in (\ref{array}), we associate an element of $R^\ell$ 
\begin{equation} \label{associate-1}
\mathbf{c}(x):=(c_0(x),c_1(x),\ldots ,c_{\ell-1}(x)) \in R^\ell ,
\end{equation}
where, for each $0\leq j \leq \ell-1$, 
\begin{equation}\label{columns} 
c_j(x):= c_{0,j}+c_{1,j}x+c_{2,j}x^2+\cdots + c_{m-1,j}x^{m-1} \in R .
\end{equation} 

Observe that the row $\lambda$-constashift invariance in $\Fq^{m\times\ell}$ corresponds to being closed under componentwise multiplication by $x$ in $R^\ell$. Therefore, the map $\Phi$ above yields an $R$-module isomorphism and any $\lambda$-QT code $C\subseteq \F_q^{m\ell}\simeq \Fq^{m\times\ell}$ of index $\ell$ can be viewed as an $R$-submodule of $R^\ell$.

Any $\lambda$-QT code over $\F_q$ decomposes into shorter codes over extension fields of $\F_q$. For this description, we refer the reader to \cite{Y} for the respective proofs of the following assertions. Let $x^m-\lambda$ factor into irreducible polynomials in $\F_q[x]$ as
\begin{equation}\label{irreducibles}
x^m-\lambda=f_1(x)f_2(x)\cdots f_s(x).
\end{equation}
As $\gcd(m,q)=1$, there are no repeating factors in (\ref{irreducibles}). By the Chinese Remainder Theorem (CRT), we have the following ring isomorphism
\begin{equation} \label{CRT-1}
R\cong \bigoplus_{i=1}^{s} \F_q[x]/\langle f_i(x)\rangle .
\end{equation}
For each $i\in\{1,2,\ldots,s\}$, let $u_i$ be the smallest nonnegative integer such that $f_i(\alpha\xi^{u_i})=0$. Then, the direct summands in (\ref{CRT-1}) can be viewed as field extensions of $\F_q$, obtained by adjoining the element $\alpha\xi^{u_i}$. If we set $\E_i:=\Fq(\alpha\xi^{u_i})\cong \Fq[x]/\langle f_i(x) \rangle$, for each $1\leq i \leq s$, then $\E_i$ is an intermediate field between $\F$ and $\Fq$ such that $\big[\E_i : \F_q\big]=e_i$ and we have (cf. (\ref{CRT-1}))
\begin{eqnarray} \label{CRT-2}
R & \simeq & \E_1 \oplus \cdots \oplus \E_{s}
 \nonumber \\
a(x) & \mapsto & \left(a(\alpha\xi^{u_1}),\ldots ,a(\alpha\xi^{u_s}) \right).
\end{eqnarray}
This naturally extends to
\begin{eqnarray} \label{CRT-3}
R^{\ell}&\simeq & \E_1^{\ell} \oplus \cdots  \oplus \E_{s}^{\ell}
\nonumber \\ 
\mathbf{a}(x) & \mapsto & \left(\mathbf{a}(\alpha\xi^{u_1}),\ldots ,\mathbf{a}(\alpha\xi^{u_s}) \right),
\end{eqnarray}
where, for any $\mathbf{a}(x)=\bigl(a_{0}(x),\ldots ,a_{\ell-1}(x)\bigr)\in R^{\ell}$, $\mathbf{a}(\delta)$ denotes the componentwise evaluation at $\delta\in\F$. Hence, a $\lambda$-QT code $C\subseteq R^\ell$ can be viewed as an $(\E_1 \oplus \cdots \oplus \E_{s})$-submodule of $\E_1^{\ell} \oplus \cdots  \oplus \E_{s}^{\ell}$ and it decomposes as
\begin{equation} \label{constituents}
C \simeq C_1\oplus \cdots  \oplus C_{s},
\end{equation}
where $C_i$ is a linear code in $\E_i^{\ell}$, for each $i$. These linear codes over various extensions of $\F_q$ are called the {\it constituents} of $C$ (see \cite[\S 7]{Y} for explicit examples).

Let $C\subseteq R^\ell$ be generated as an $R$-module by
\[
\left\{\bigl(a_{1,0}(x),\ldots ,a_{1,\ell-1}(x)\bigr),\ldots,\bigl(a_{r,0}(x),\ldots ,a_{r,\ell-1}(x)\bigr)\right\}.
\]
Then, for $1\leq i \leq s$, we have
\begin{equation}\label{explicit constituents}
 C_i  = \Span_{\E_i}\bigl\{\bigl(a_{b,0}(\alpha\xi^{u_i}),\ldots, a_{b,\ell-1}(\alpha\xi^{u_i})\bigr): 1\leq b \leq r \bigr\}.   
\end{equation}

Equivalently, for $1\leq i \leq s$, each extension field $\E_i$ above is isomorphic to a minimal $\lambda$-constacyclic code of length $m$ over $\F_q$ with the irreducible check polynomial $f_i(x)$. If we denote by $\theta_i$ the generating primitive idempotent (see \cite[Theorem 1]{LG}) for the minimal $\lambda$-constacyclic code $\langle \theta_i \rangle$ in consideration, then the isomorphism is given by the maps
\begin{eqnarray} \label{isoms}\hspace{-10pt}
\begin{array}{ccl} \varphi_i:\langle \theta_i \rangle
& \longrightarrow & \E_i \\ \hspace{0.5cm} a(x)& \longmapsto &
a(\alpha\xi^{u_i}) \end{array}
&\hspace{-16pt} & \begin{array}{ccl} \psi_i: \E_i & \longrightarrow & \langle \theta_i \rangle \\
\hspace{0.5cm} \delta & \longmapsto & \sum\limits_{k=0}^{m-1} a_kx^k,
\end{array}
\end{eqnarray}
where
$$a_k=\frac{1}{m} \Tr_{\E_i/\F_q}(\delta\alpha^{-k}\xi^{-ku_i}).\vspace{8pt}$$
Note that, for each $1\leq i\leq s$, the maps $\varphi_i$ and $\psi_i$ are inverses of each other, regardless of the choice of the representative in the $\Fq$-conjugacy class $\big[\alpha\xi^{u_i}\big]$, since $\Tr_{\E_i/\F_q}(\epsilon^q)=\Tr_{\E_i/\F_q}(\epsilon)$, for any $\epsilon\in\E_i$.

If $\mathfrak{C}_i$ is a length $\ell$ linear code over $\E_i$, for $1\leq i \leq s$, then we denote its concatenation with $\langle \theta_i \rangle$ by $\langle \theta_i \rangle \Box \mathfrak{C}_i$, where this concatenation is carried out by the map $\psi_i$, extended to $\E_i^\ell$. In other words, $\psi_i$ is applied to each coordinate of the codewords in $\mathfrak{C}_i$ to produce an element of $\langle \theta_i\rangle ^\ell$. Here, $\langle \theta_i \rangle$ and $\mathfrak{C}_i$ are called the inner and outer codes of the concatenation, respectively. Jensen gave the following concatenated description for QT codes.

\begin{thm} \cite[Theorem 2]{LG} \label{Jensen's thm}\hfill
\begin{enumerate}
\item[i.] Let $C$ be a $\lambda$-QT code of length $m\ell$ and index $\ell$ over $\Fq$. Then, for some subset $\mathcal{I}$ of $\{1,\ldots ,s\}$, there exist linear codes $\mathfrak{C}_i$ of length $\ell$ over $\E_i$, which can be explicitly described, such that $$C=\displaystyle\bigoplus_{i\in \mathcal{I}} \langle \theta_i \rangle \Box \mathfrak{C}_i.$$
\item[ii.] Conversely, let $\mathfrak{C}_i$ be a linear code in $\E_i^\ell$, for each $i\in \mathcal{I} \subseteq \{1,\ldots ,s\}$. Then, $$C=\displaystyle\bigoplus_{i\in \mathcal{I}} \langle \theta_i \rangle \Box \mathfrak{C}_i$$ is a $q$-ary $\lambda$-QT code of length $m\ell$ and index $\ell$.
\end{enumerate}
\end{thm}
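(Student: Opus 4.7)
The plan is to deduce both parts directly from the Chinese Remainder Theorem decomposition in (\ref{CRT-2})--(\ref{CRT-3}) together with the pair of mutually inverse $\F_q$-linear isomorphisms $\varphi_i,\psi_i$ in (\ref{isoms}), so that the whole statement becomes a translation between two equivalent descriptions of $R$-submodules of $R^\ell$.

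For part (i), I would identify $C$ with an $R$-submodule of $R^\ell$ via $\Phi$ and push it through (\ref{CRT-3}) to obtain $C\simeq C_1\oplus\cdots\oplus C_s$ as in (\ref{constituents}), where $C_i$ is the image of $C$ under coordinatewise evaluation at $\alpha\xi^{u_i}$. Because the $R$-action on the $i$-th summand factors through the quotient $R\twoheadrightarrow\E_i$, each $C_i$ is automatically an $\E_i$-linear subspace of $\E_i^\ell$, whose explicit generators are the ones listed in (\ref{explicit constituents}). Setting $\mathcal{I}:=\{i: C_i\neq\{0\}\}$ and $\mathfrak{C}_i:=C_i$ for $i\in\mathcal{I}$, the bridge to concatenation is the observation that $\theta_i$ is the preimage in $R$ of the unit of the $i$-th direct summand, so $\langle\theta_i\rangle$ coincides with that summand viewed inside $R$, and $\psi_i$ is precisely the inverse of the evaluation $\varphi_i$ on this component. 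Extending $\psi_i$ coordinatewise gives an $\F_q$-linear isomorphism $\E_i^\ell\simeq\langle\theta_i\rangle^\ell$ whose image on $\mathfrak{C}_i$ is, by definition of the concatenation, $\langle\theta_i\rangle\Box\mathfrak{C}_i$. Combining the $\psi_i$ inverts (\ref{CRT-3}), so the preimage of $C_1\oplus\cdots\oplus C_s$ inside $R^\ell$ is exactly $\bigoplus_{i\in\mathcal{I}}\langle\theta_i\rangle\Box\mathfrak{C}_i$, which yields part (i).

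For part (ii), I would run the same dictionary in reverse. Each $\langle\theta_i\rangle$ is an ideal of $R$, hence $\langle\theta_i\rangle^\ell$ and its $\F_q$-linear subset $\psi_i(\mathfrak{C}_i)=\langle\theta_i\rangle\Box\mathfrak{C}_i$ are closed under componentwise multiplication by $x$; under $\Phi$ this is the row $\lambda$-constashift by $\ell$ positions, so the sum over $i\in\mathcal{I}$ is again an $R$-submodule of $R^\ell$ and therefore a $\lambda$-QT code of length $m\ell$ and index dividing $\ell$. The main obstacle throughout is the bookkeeping in the second paragraph: one must verify that the trace formula for $\psi_i$ is indeed the CRT inverse on $\langle\theta_i\rangle$, which is what legitimises identifying the concatenation $\Box$ with the CRT isomorphism. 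Once this identification is pinned down, both halves of the theorem reduce to the routine fact that a submodule of a direct sum of fields is a direct sum of subspaces.
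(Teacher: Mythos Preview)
The paper does not supply its own proof of this theorem: it is stated as a citation of \cite[Theorem 2]{LG} and immediately followed by the remark that the constituents $C_i$ coincide with the outer codes $\mathfrak{C}_i$ (citing \cite[Theorem 3]{LG}). So there is nothing to compare your argument against within this paper.

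That said, your proposal is sound and is essentially the standard argument. You correctly identify that the key ingredient is the fact that the primitive idempotent $\theta_i$ is the preimage of the $i$-th unit under (\ref{CRT-2}), so that $\langle\theta_i\rangle$ is exactly the $i$-th direct summand sitting inside $R$, and that the coordinatewise extension of $\psi_i$ inverts the CRT map (\ref{CRT-3}) on that summand. Your check in part (ii) that $\langle\theta_i\rangle\Box\mathfrak{C}_i$ is closed under multiplication by $x$ is the right one; note that this uses the $\E_i$-linearity of $\mathfrak{C}_i$, since multiplication by $x$ on $\langle\theta_i\rangle$ corresponds via $\varphi_i$ to multiplication by $\alpha\xi^{u_i}\in\E_i$. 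Your caveat that the resulting code has index \emph{dividing} $\ell$ rather than exactly $\ell$ is in fact more precise than the theorem as stated; this is a common abuse in the literature and is harmless for the applications in this paper.
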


Moreover, each constituent $C_i$ in (\ref{constituents}) is equal to the outer code $\mathfrak{C}_i$ in the concatenated structure, for each $1\leq i \leq s$ (see \cite[Theorem 3]{LG}).

Jensen derived a minimum distance bound in \cite[Theorem 4]{J}, which is valid for all concatenated codes ({\it i.e.}, the inner and outer codes can be any linear code). Therefore, it applies to QT codes as well. We formulate the Jensen bound for QT codes as follows.
\begin{thm}  \label{Jensen bound}
Let $C\subseteq R^\ell$ be a $\lambda$-QT code with the concatenated structure $C=\bigoplus_{i\in \mathcal{I}} \langle \theta_i \rangle \Box C_i$, for some $\mathcal{I}\subseteq\{1,\ldots ,s\}$. Assume that $C_{i_1}, \ldots, C_{i_t}$ are the nonzero outer codes (constituents) of $C$, for $\{i_1,\ldots,i_t\} \subseteq \mathcal{I}$, such that $d(C_{i_1}) \leq d(C_{i_2})\leq \cdots \leq d(C_{i_t})$. Then, we have
\begin{equation}\label{Jensen}
d(C) \geq \displaystyle \min_{1\leq r \leq t} \left\{ d(C_{i_r}) d(\langle\theta_{i_1}\rangle \oplus \cdots \oplus \langle\theta_{i_r}\rangle) \right\}.
\end{equation}
\end{thm}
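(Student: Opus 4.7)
The plan is to bound the Hamming weight of an arbitrary nonzero $\mathbf{c}\in C$ from below by the right-hand side of \eqref{Jensen}. By Theorem \ref{Jensen's thm}, any such $\mathbf{c}$ can be written as $\mathbf{c} = \sum_{i\in\mathcal{I}}\psi_i(\mathbf{c}_i)$, where $\mathbf{c}_i = (c_{i,0},\ldots,c_{i,\ell-1})\in C_i\subseteq \E_i^\ell$ and $\psi_i$ is applied coordinatewise. Viewing $\mathbf{c}$ as the $m\times\ell$ array in \eqref{array}, its $j$-th column corresponds via $\phi$ to $c_j(x) = \sum_{i\in\mathcal{I}}\psi_i(c_{i,j})\in R$, and each summand $\psi_i(c_{i,j})$ lies in the minimal constacyclic code $\langle\theta_i\rangle$.

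Set $S = \{i\in\{i_1,\ldots,i_t\} : \mathbf{c}_i\neq 0\}$, which is nonempty, and let $r^*$ be the largest index in $\{1,\ldots,t\}$ with $i_{r^*}\in S$. Since $\mathbf{c}_{i_{r^*}}\neq 0$ is a codeword of $C_{i_{r^*}}$, it has at least $d(C_{i_{r^*}})$ nonzero coordinates. For each such coordinate $j$, the bijection $\psi_{i_{r^*}}$ gives $\psi_{i_{r^*}}(c_{i_{r^*},j})\neq 0$; here I would exploit the CRT decomposition \eqref{CRT-2}, under which the sum $\sum_{i}\langle\theta_i\rangle$ inside $R$ is direct, so this single nonzero summand cannot be cancelled by the contributions from other indices. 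Hence $c_j(x)\neq 0$, producing at least $d(C_{i_{r^*}})$ nonzero columns in the array of $\mathbf{c}$. Ruling out such cancellations is the only genuinely structural step of the proof.

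Each nonzero column, as an element of $R\cong\F_q^m$, lies in $\bigoplus_{i\in S}\langle\theta_i\rangle$; and since $S\subseteq\{i_1,\ldots,i_{r^*}\}$, this subcode is contained in $\langle\theta_{i_1}\rangle\oplus\cdots\oplus\langle\theta_{i_{r^*}}\rangle$. Containment of codes reverses on minimum distances, giving $d(\bigoplus_{i\in S}\langle\theta_i\rangle)\geq d(\langle\theta_{i_1}\rangle\oplus\cdots\oplus\langle\theta_{i_{r^*}}\rangle)$, so the Hamming weight of each nonzero column is at least $d(\langle\theta_{i_1}\rangle\oplus\cdots\oplus\langle\theta_{i_{r^*}}\rangle)$. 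Combining the column count with the per-column bound yields
\begin{equation*}
\mbox{wt}(\mathbf{c})\ \geq\ d(C_{i_{r^*}})\cdot d\bigl(\langle\theta_{i_1}\rangle\oplus\cdots\oplus\langle\theta_{i_{r^*}}\rangle\bigr)\ \geq\ \min_{1\leq r\leq t}\bigl\{d(C_{i_r})\,d(\langle\theta_{i_1}\rangle\oplus\cdots\oplus\langle\theta_{i_r}\rangle)\bigr\}.
\end{equation*}
The subtlety behind selecting the \emph{largest} such $r^*$ is a matching trade-off: a larger $r^*$ gives a larger guaranteed column count, since $d(C_{i_{r^*}})$ is the biggest among the distances of nonzero constituents appearing in $\mathbf{c}$, at the cost of enlarging the ambient inner-code sum and thereby weakening the per-column weight bound. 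It is exactly this weakened per-column bound that is absorbed by the minimum over $r$ in \eqref{Jensen}.
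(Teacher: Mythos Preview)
The paper does not supply its own proof of this theorem: it is stated as a reformulation of Jensen's result \cite[Theorem 4]{J} for the QT setting, with no \texttt{proof} environment following the statement. Your argument is correct and is precisely the standard concatenated-code proof one finds in Jensen's original work: decompose a nonzero codeword along the direct sum of outer codes, locate the maximal index $r^*$ with a nonzero constituent so that the active index set $S$ sits inside $\{i_1,\ldots,i_{r^*}\}$, use $d(C_{i_{r^*}})$ to lower-bound the number of nonzero columns, and use the inner-code distance $d(\langle\theta_{i_1}\rangle\oplus\cdots\oplus\langle\theta_{i_{r^*}}\rangle)$ to lower-bound each column weight. The only remark I would sharpen is your closing ``trade-off'' commentary: the choice of $r^*$ as the \emph{largest} index in $S$ is not an optimization among several valid choices but a necessity, since for any smaller $r$ the containment $S\subseteq\{i_1,\ldots,i_r\}$ may fail and the per-column bound would not apply.
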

\vspace{7pt}

\subsection{Spectral theory for QT codes}    \label{spec sect} \hfill

We provide an adaptation of the results in \cite{LF} for QT codes, where it was shown that every QC code has a polynomial generating set in the form of a reduced Gr{\"o}bner basis. 

Consider the ring homomorphism:
\begin{eqnarray}\label{embedding}
\Psi \ :\ \Fq[x]^{\ell} &\longrightarrow& R^{\ell} \\\nonumber
(\widetilde{f}_0(x), \ldots ,\widetilde{f}_{\ell-1}(x))  &\longmapsto & 
(f_0(x), \ldots ,f_{\ell-1}(x)),
\end{eqnarray}
where $f_j(x)=\widetilde{f}_j(x) \mod \langle x^m-\lambda \rangle$, for $0\leq j \leq \ell-1$. Given a $\lambda$-QT code $C\subseteq R^{\ell}$, its preimage $\widetilde{C}$ in $\Fq[x]^{\ell}$ is an $\Fq[x]$-submodule containing $\widetilde{K} =\{(x^m-\lambda)\mathbf{e}_j : 0\leq j \leq \ell-1\}$, where each $\mathbf{e}_j$ denotes the standard basis vector of length $\ell$ with $1$ at the $j^{\rm th}$ coordinate and $0$ elsewhere. From this point on, the tilde will represent structures over $\Fq[x]$.

Since $\widetilde{C}$ is a submodule of the finitely generated free module $\F_q[x]^{\ell}$ over the principal ideal domain $\Fq[x]$ and contains $\widetilde{K}$, it has a generating set of the form $$\{\mathbf{u}_1,\ldots,\mathbf{u}_p, (x^m-\lambda)\mathbf{e}_0,\ldots,(x^m-\lambda)\mathbf{e}_{\ell-1}\},$$  where $p$ is a nonnegative integer and when $p>0$, $\mathbf{u}_b = (u_{b,0}(x),\ldots,u_{b,\ell-1}(x))\in \Fq[x]^{\ell}$, for each $b \in \{1,\ldots,p\}$. 
Hence, the rows of
$$\mathcal{G}=\left(\begin{array}{ccc}
    u_{1,0}(x) & \ldots & u_{1,\ell-1}(x) \\
    \vdots &  & \vdots \\
    u_{p,0}(x) & \ldots & u_{p,\ell-1}(x) \\
     x^m-\lambda & \ldots & 0 \\
    \vdots & \ddots & \vdots \\
    0 & \ldots & x^m-\lambda \\
  \end{array}
\right)$$
generate $\widetilde{C}$. By using elementary row operations, we obtain another equivalent generating set from the rows of an upper-triangular $\ell \times \ell$ matrix over $\Fq[x]$ as:
\begin{equation}\label{PGM}
\widetilde{G}(x)=\left(\begin{array}{cccc}
    g_{0,0}(x) & g_{0,1}(x) & \ldots & g_{0,\ell-1}(x) \\
    0 & g_{1,1}(x) & \ldots & g_{1,\ell-1}(x) \\
    \vdots & \vdots & \ddots & \vdots \\
    0 & 0 &\ldots & g_{\ell-1,\ell-1}(x)\\
  \end{array}
\right),
\end{equation}
where $\widetilde{G}(x)$ satisfies the following conditions (see \cite[Theorem 2.1]{LF}):
\begin{enumerate}
    \item $g_{i,j}(x)=0$, for all $0\leq j < i \leq \ell-1$.
    \item deg$(g_{i,j}(x)) < $ deg$(g_{j,j}(x))$, for all $i <j$.
    \item $g_{j,j}(x) \mid (x^m-\lambda)$, for all $0\leq j \leq \ell-1$.
    \item If  $g_{j,j}(x) = (x^m-\lambda)$, then $g_{i,j}(x) =0$, for all $i\neq j$.
\end{enumerate}
As the rows of $\widetilde{G}(x)$ are nonzero, each nonzero element of $\widetilde{C}$ can be expressed in the form \begin{center}$(0,\ldots,0,c_j(x),\ldots,c_{\ell-1}(x)),$ where $j\geq 0$, $ c_j(x)\neq 0$ and $g_{j,j}(x)\mid c_j(x)$. \end{center} Hence, the rows of $\widetilde{G}(x)$ form a Gr\"obner basis of $\widetilde{C}$ with respect to the position-over-term (POT) order in $\Fq[x]$, where the standard basis vectors $\{\mathbf{e}_0,\ldots,\mathbf{e}_{\ell-1}\}$ and the monomials $x^n$ are ordered naturally in each component. Moreover, the condition (2) above implies that the rows of $\widetilde{G}(x)$ provide a reduced Gr\"obner basis for $\widetilde{C}$, which is unique up to multiplication by constants, with monic diagonal elements.

Let $G(x)$ be the matrix with the rows of $\widetilde{G}(x)$ under the image of $\Psi$ in (\ref{embedding}). Clearly, the rows of $G(x)$ is an $R$-generating set for $C$. When $C$ is the trivial zero code of length $m\ell$, we have $p=0$ giving $G(x)=\mathbf{0}_{\ell}$. Otherwise, we say that $C$ is an $r$-generator $\lambda$-QT code, generated as an $R$-submodule, if $G(x)$ has $r$ (nonzero) rows. The $\Fq$-dimension of $C$ is given by (see \cite[Corollary 2.4]{LF} for the proof) 
\begin{equation}\label{dimension}
m\ell-\sum_{j=0}^{\ell-1}\mbox{deg}(g_{j,j}(x))=\sum_{j=0}^{\ell-1}\left[m -\mbox{deg}(g_{j,j}(x))\right].
\end{equation}

Given a $\lambda$-QT code $C\subseteq R^{\ell}$, let the associated $\ell \times \ell$ upper-triangular matrix $\widetilde{G}(x)$ be as in (\ref{PGM}) with entries in $\Fq[x]$. The {\it determinant} of $\widetilde{G}(x)$ is defined as $$\mbox{det}(\widetilde{G}(x)):=\prod_{j=0}^{\ell-1}g_{j,j}(x)$$ and an {\it eigenvalue} $\beta$ of $C$ is a root of det$(\widetilde{G}(x))$. Consequently, all eigenvalues are in $\Omega$ ({\it i.e.}, $\beta=\alpha\xi^k$, for some $k\in\{0,\ldots,m-1\}$), since $g_{j,j}(x)\mid (x^m-\lambda)$, for each $0\leq j\leq \ell-1$. The {\it algebraic multiplicity} of $\beta$ is the largest integer $a$ such that $(x-\beta)^a\mid\mbox{det}(\widetilde{G}(x))$. The {\it geometric multiplicity} of $\beta$ is defined as the dimension of the null space of $\widetilde{G}(\beta)$, where this null space is called the {\it eigenspace} of $\beta$ and it is denoted by $\mathcal{V}_{\beta}$. In other words, we have $$\mathcal{V}_{\beta}:=\{\mathbf{v}\in\F^{\ell} : \widetilde{G}(\beta)\mathbf{v}^{\top}=\mathbf{0}_{\ell}^{\top}\},$$ where $\F$ is the splitting field of $x^m-\lambda$ as before. Semenov and Trifonov \cite{ST} showed that, given a QC code and the associated $\widetilde{G}(x)$, the algebraic multiplicity $a$ of an eigenvalue $\beta$ is equal to its geometric multiplicity $\mbox{dim}_{\F}(\mathcal{V}_{\beta})$. The following QT analogue of this result can be shown in the same way.
\begin{lem}\cite[Lemma 1]{ST}\label{multiplicity lemma}
The algebraic multiplicity of any eigenvalue of a $\lambda$-QT code $C$ is equal to its geometric multiplicity.
\end{lem}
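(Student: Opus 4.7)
The plan is to apply the Smith normal form of $\widetilde{G}(x)$ over the principal ideal domain $\F_q[x]$ and reduce both multiplicity computations to the diagonal case. The single observation driving the whole argument is that every invariant factor of $\widetilde{G}(x)$ will turn out to divide $x^m-\lambda$ and hence to be separable.

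First, I invoke the Smith normal form: there exist unimodular matrices $P(x),Q(x)\in\mathrm{GL}_\ell(\F_q[x])$ (i.e.\ with $\det P,\det Q\in\F_q^*$) and invariant factors $d_1(x)\mid d_2(x)\mid\cdots\mid d_\ell(x)$ such that $\widetilde{G}(x)=P(x)\,D(x)\,Q(x)$, where $D(x)=\mathrm{diag}(d_1(x),\ldots,d_\ell(x))$; all $d_i$ are nonzero since $\det\widetilde{G}(x)=\prod_j g_{j,j}(x)\neq 0$. Next, I would argue that each $d_i(x)$ divides $x^m-\lambda$. By the structure theorem, the $\F_q[x]$-cokernel of the row map of $\widetilde{G}(x)$ is $\F_q[x]^\ell/\widetilde{C}\cong\bigoplus_{i=1}^\ell \F_q[x]/(d_i(x))$. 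The inclusion $\widetilde{K}\subseteq\widetilde{C}$ gives $(x^m-\lambda)\mathbf{e}_j\in\widetilde{C}$ for every $j$, hence $(x^m-\lambda)\F_q[x]^\ell\subseteq\widetilde{C}$, so the cokernel is annihilated by $x^m-\lambda$. This forces $d_i(x)\mid(x^m-\lambda)$ for every $i$, and since $\gcd(m,q)=1$ makes $x^m-\lambda$ separable, each $d_i(x)$ is separable as well.

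Finally, fix an eigenvalue $\beta\in\Omega$. Because each $d_i$ has only simple roots, the algebraic multiplicity of $\beta$ equals
\begin{equation*}
\mathrm{ord}_\beta(\det\widetilde{G}(x))=\sum_{i=1}^\ell\mathrm{ord}_\beta(d_i(x))=\bigl|\{i:d_i(\beta)=0\}\bigr|.
\end{equation*}
Evaluating the Smith decomposition at $\beta$ gives $\widetilde{G}(\beta)=P(\beta)D(\beta)Q(\beta)$; since $\det P$ and $\det Q$ are nonzero constants, $P(\beta)$ and $Q(\beta)$ are invertible over $\F$, so $\mathcal{V}_\beta=\ker\widetilde{G}(\beta)$ has the same $\F$-dimension as $\ker D(\beta)$, which is again $|\{i:d_i(\beta)=0\}|$. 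The two multiplicities coincide. The main obstacle in this plan is the middle step: recognizing that $\F_q[x]^\ell/\widetilde{C}$ inherits an $R$-module structure from $\widetilde{K}\subseteq\widetilde{C}$ and drawing the consequence for the invariant factors. Once that is in hand, separability does the rest, and one does not need to use the upper-triangular shape of $\widetilde{G}(x)$ or the fine conditions (1)--(4) on its entries.
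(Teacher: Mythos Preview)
Your argument is correct. The Smith normal form over the PID $\F_q[x]$ exists and gives $\widetilde{G}(x)=P(x)D(x)Q(x)$ with nonzero invariant factors; the inclusion $(x^m-\lambda)\F_q[x]^\ell\subseteq\widetilde{C}$ indeed forces each $d_i(x)\mid(x^m-\lambda)$, and separability then collapses both the algebraic and the geometric count to $|\{i:d_i(\beta)=0\}|$. The evaluation step is sound because unimodularity makes $\det P,\det Q$ nonzero scalars, so $P(\beta),Q(\beta)\in\mathrm{GL}_\ell(\F)$.

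As for comparison: the paper does not supply its own proof of this lemma. It records the statement, attributes it to Semenov and Trifonov for the quasi-cyclic case, and remarks that the quasi-twisted analogue ``can be shown in the same way.'' The Semenov--Trifonov argument works directly with the upper-triangular reduced Gr\"obner basis $\widetilde{G}(x)$ and the conditions (1)--(4), reading off both multiplicities from the diagonal entries $g_{j,j}(x)$ (each of which divides $x^m-\lambda$ and is therefore separable). Your route is genuinely different: by passing to the Smith normal form you replace the triangular shape and the Gr\"obner conditions with a single module-theoretic observation about the annihilator of $\F_q[x]^\ell/\widetilde{C}$. What you gain is a proof that is insensitive to the particular presentation of $\widetilde{C}$ and that would apply verbatim to any $\ell\times\ell$ matrix over $\F_q[x]$ whose row span contains $(x^m-\lambda)\F_q[x]^\ell$; what the triangular approach buys is that one never leaves the explicit matrix $\widetilde{G}(x)$ already produced in the paper, so no auxiliary factorization is needed.
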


Henceforth, let $\overline{\Omega}\subseteq \Omega$ denote the set of all eigenvalues of $C$. Notice that $\overline{\Omega}=\emptyset$ if and only if the diagonal elements $g_{j,j}(x)$ in $\widetilde{G}(x)$ are constant polynomials and $C$ is the trivial full space code. For $|\overline{\Omega}|=t>0$, we choose an arbitrary eigenvalue $\beta_i\in\overline{\Omega}$ with multiplicity $n_i$, for some $i \in\{1,\ldots,t\}$. Let $\{\mathbf{v}_{i,0},\ldots,\mathbf{v}_{i,n_i-1}\}$ be a basis for the corresponding eigenspace $\mathcal{V}_i$. Consider the matrix  
\begin{equation}\label{Eigenspace} 
V_i:=\begin{pmatrix}
\mathbf{v}_{i,0} \\
\vdots\\
\mathbf{v}_{i,n_i-1} 
 \end{pmatrix}=\begin{pmatrix}
v_{i,0,0}&\ldots&v_{i,0,\ell-1} \\
\vdots &\vdots & \vdots\\
v_{i,n_i-1,0}&\ldots&v_{i,n_i-1,\ell-1}
 \end{pmatrix},
\end{equation} 
having the basis elements as its rows. We let
\[ H_i:=(1, \beta_i,\ldots,\beta_i^{m-1})\otimes V_i\ \]
and define
\begin{equation}\label{parity check matrix} 
H:=\begin{pmatrix}
H_1 \\
\vdots\\
H_t 
\end{pmatrix}=\begin{pmatrix}
V_1&\beta_1V_1&\ldots&\beta_1^{m-1}V_1 \\
\vdots &\vdots & & \vdots\\
V_t&\beta_tV_t&\ldots&\beta_t^{m-1}V_t
\end{pmatrix}.
\end{equation} 
Note that $H$ has $n:=\sum_{i=1}^t n_i$ rows. By Lemma \ref{multiplicity lemma}, we have $n=\sum_{j=0}^{\ell-1}\mbox{deg}(g_{j,j}(x))$. Together with the fact that all these $n$ rows are linearly independent, the following lemma can be shown easily.
\begin{lem}\cite[Lemma 2]{ST}\label{rank lemma}
The rank of the matrix $H$ in (\ref{parity check matrix}) is equal to $m\ell -\dim_{\Fq}(C)$.
\end{lem}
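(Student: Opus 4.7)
The plan is to reduce Lemma~\ref{rank lemma} to two assertions: (i) the $n=\sum_{i=1}^{t}n_i$ rows of $H$ are $\F$-linearly independent, so that $\mathrm{rank}(H)=n$; and (ii) $n=m\ell-\dim_{\Fq}(C)$. Assertion (ii) is essentially bookkeeping: since $\det(\widetilde{G}(x))=\prod_{j=0}^{\ell-1}g_{j,j}(x)$, its total degree equals $\sum_{j=0}^{\ell-1}\deg(g_{j,j}(x))$, and Lemma~\ref{multiplicity lemma} identifies this total degree with the sum of geometric multiplicities $\sum_{i=1}^{t}n_i=n$. Plugging into the dimension formula (\ref{dimension}) then yields $n=m\ell-\dim_{\Fq}(C)$.

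The heart of the proof is therefore (i), which I would establish by exploiting the Vandermonde structure of the distinct eigenvalues. Suppose an $\F$-linear relation among the rows of $H$ holds, say
\[
\sum_{i=1}^{t}\sum_{r=0}^{n_i-1}c_{i,r}\bigl(\mathbf{v}_{i,r},\,\beta_i\mathbf{v}_{i,r},\,\ldots,\,\beta_i^{m-1}\mathbf{v}_{i,r}\bigr)=\mathbf{0}\in\F^{m\ell}.
\]
Reading off the $j$-th coordinate of the $k$-th length-$\ell$ block, for every $k\in\{0,\ldots,m-1\}$ and $j\in\{0,\ldots,\ell-1\}$, gives the scalar identity $\sum_{i=1}^{t}d_{i,j}\beta_i^{k}=0$, where $d_{i,j}:=\sum_{r=0}^{n_i-1}c_{i,r}v_{i,r,j}$. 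Since $\beta_1,\ldots,\beta_t$ are distinct elements of $\F$ with $t\leq|\Omega|=m$, the matrix $\bigl(\beta_i^{k}\bigr)_{0\leq k\leq m-1,\,1\leq i\leq t}$ contains a $t\times t$ Vandermonde submatrix and hence has full column rank. This forces $d_{i,j}=0$ for all $i,j$, which is the vector identity $\sum_{r=0}^{n_i-1}c_{i,r}\mathbf{v}_{i,r}=\mathbf{0}\in\F^{\ell}$ for each $i$. Because $\{\mathbf{v}_{i,0},\ldots,\mathbf{v}_{i,n_i-1}\}$ is an $\F$-basis of the eigenspace $\mathcal{V}_i$, we conclude $c_{i,r}=0$ throughout, as desired.

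The main obstacle is essentially notational: one has to keep the tensor-product indexing of $H_i=(1,\beta_i,\ldots,\beta_i^{m-1})\otimes V_i$ straight while extracting the Vandermonde system column-by-column. Once Lemma~\ref{multiplicity lemma} is invoked and the dimension formula (\ref{dimension}) is plugged in, the conclusion $\mathrm{rank}(H)=n=m\ell-\dim_{\Fq}(C)$ is immediate.
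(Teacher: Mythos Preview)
Your proposal is correct and follows precisely the approach indicated in the paper: the paper states just before Lemma~\ref{rank lemma} that $n=\sum_{j=0}^{\ell-1}\deg(g_{j,j}(x))$ by Lemma~\ref{multiplicity lemma} and that the $n$ rows of $H$ are linearly independent, leaving the lemma as an easy consequence via (\ref{dimension}). You have simply supplied the Vandermonde argument for the linear independence that the paper asserts without detail, so there is nothing to add.
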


Observe that $H \mathbf{c}^{\top}=\mathbf{0}_n^{\top}$, for any codeword $\mathbf{c}\in C$. Together with Lemma \ref{rank lemma}, we easily obtain the following result (see \cite[Theorem 1]{ST} for the QC analogue of the result).

\begin{prop}
The $n\times m\ell$ matrix $H$ in (\ref{parity check matrix}) is a parity-check matrix for $C$.
\end{prop}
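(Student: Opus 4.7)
The proposition has two ingredients: (i) every codeword $\mathbf{c}\in C$ lies in the right kernel of $H$, and (ii) the row space of $H$ has the correct dimension $m\ell-\dim_{\Fq}(C)$ to cut out $C$. Part~(ii) is exactly Lemma~\ref{rank lemma}. So all the real work is in (i), and that is where I would focus.

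\emph{Step 1: reduce each row of $H$ to an evaluation identity.} Writing $\mathbf{c}$ as in~(\ref{array}) with column polynomials $c_0(x),\ldots,c_{\ell-1}(x)\in R$ given by~(\ref{columns}), a row of $H$ coming from a basis eigenvector $\mathbf{v}_{i,k}$ of $\mathcal{V}_i$ has entries $\beta_i^{r}v_{i,k,j}$ for $0\le r\le m-1$, $0\le j\le \ell-1$. Exchanging the order of summation gives
\begin{equation*}
\sum_{r=0}^{m-1}\sum_{j=0}^{\ell-1}\beta_i^{r}v_{i,k,j}\,c_{r,j}
=\sum_{j=0}^{\ell-1}v_{i,k,j}\,c_j(\beta_i)
=\mathbf{v}_{i,k}\,\mathbf{c}(\beta_i)^{\top},
\end{equation*}
where $\mathbf{c}(\beta_i):=(c_0(\beta_i),\ldots,c_{\ell-1}(\beta_i))\in\F^{\ell}$. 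Hence $H\mathbf{c}^{\top}=\mathbf{0}_n^{\top}$ reduces to showing $\mathbf{v}_{i,k}\,\mathbf{c}(\beta_i)^{\top}=0$ for all $i,k$.

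\emph{Step 2: pull back to $\Fq[x]^{\ell}$ and invoke the eigenspace condition.} I would lift $\mathbf{c}(x)$ to any preimage $\widetilde{\mathbf{c}}(x)\in\widetilde{C}\subseteq\Fq[x]^{\ell}$ under the map $\Psi$ of~(\ref{embedding}); since $\beta_i\in\Omega$ satisfies $\beta_i^{m}=\lambda$, reduction modulo $x^m-\lambda$ is transparent under evaluation at $\beta_i$, so $\mathbf{c}(\beta_i)=\widetilde{\mathbf{c}}(\beta_i)$. Because the rows of $\widetilde{G}(x)$ generate $\widetilde{C}$ as an $\Fq[x]$-module, there exists $\mathbf{u}(x)\in\Fq[x]^{\ell}$ with $\widetilde{\mathbf{c}}(x)=\mathbf{u}(x)\widetilde{G}(x)$, and evaluation yields $\widetilde{\mathbf{c}}(\beta_i)=\mathbf{u}(\beta_i)\,\widetilde{G}(\beta_i)$. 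Substituting into Step~1,
\begin{equation*}
\mathbf{v}_{i,k}\,\mathbf{c}(\beta_i)^{\top}
=\mathbf{u}(\beta_i)\,\widetilde{G}(\beta_i)\,\mathbf{v}_{i,k}^{\top}=0,
\end{equation*}
where the last equality is precisely the definition of $\mathbf{v}_{i,k}\in\mathcal{V}_i=\ker\widetilde{G}(\beta_i)$.

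\emph{Step 3 and main obstacle.} Running Steps~1 and~2 for every pair $(i,k)$ gives $H\mathbf{c}^{\top}=\mathbf{0}_n^{\top}$ for all $\mathbf{c}\in C$, so $C$ is contained in the null space of $H$. Lemma~\ref{rank lemma} then forces this containment to be an equality of dimensions, and hence $H$ is a parity-check matrix. The only genuinely delicate point is the passage between $R^{\ell}$ and $\Fq[x]^{\ell}$: one must be careful that the factorization $\widetilde{\mathbf{c}}=\mathbf{u}\,\widetilde{G}$ really lives in $\Fq[x]^{\ell}$ (which follows from the reduced Gr\"obner basis description of $\widetilde{C}$ recalled in Section~\ref{spec sect}) and that evaluation at an $m$-th root of $\lambda$ commutes with the reduction built into $\Psi$. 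Both are routine, so no new idea beyond the spectral setup is needed.
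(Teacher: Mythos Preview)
Your proof is correct and follows exactly the approach the paper sketches: the paper simply states that $H\mathbf{c}^{\top}=\mathbf{0}_n^{\top}$ for every $\mathbf{c}\in C$ and then invokes Lemma~\ref{rank lemma}, while you have carefully unpacked the first assertion via the evaluation identity and the factorization through $\widetilde{G}(x)$. The transpose manipulation in Step~2 is fine since the quantity is a scalar, so $\mathbf{v}_{i,k}\,\mathbf{c}(\beta_i)^{\top}=\mathbf{c}(\beta_i)\,\mathbf{v}_{i,k}^{\top}=\mathbf{u}(\beta_i)\,\widetilde{G}(\beta_i)\,\mathbf{v}_{i,k}^{\top}=0$.
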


\begin{rem}\label{analogy rem}
The eigenvalues are the QT analogues of the zeros of constacyclic codes. Recall that a constacyclic code has an empty zero set if and only if it is equal to the full space. Similarly, $\overline{\Omega}=\emptyset$ if and only if $C=\Fq^{m\ell}$. In this case, the construction of the parity-check matrix $H$ in (\ref{parity check matrix}) is impossible, hence, we assume $H=\mathbf{0}_{m\ell}$. The other extreme case is when the zero set of a constacyclic code is $\Omega$, which implies that we have the trivial zero code. However, we emphasize that a $\lambda$-QT code with $\overline{\Omega}=\Omega$ is not necessarily the zero code. By Lemma \ref{rank lemma}, one can easily deduce that a given $\lambda$-QT code $C$ is the zero code $\{\mathbf{0}_{m\ell}\}$ if and only if $\overline{\Omega}=\Omega$ and each $\mathcal{V}_i=\F^{\ell}$ (equivalently, each $V_i=I_{\ell}$, where $I_{\ell}$ denotes the $\ell\times\ell$ identity matrix), and $n=m\ell$ so that we obtain $H=I_{m\ell}$. On the other hand, $\overline{\Omega}=\Omega$ whenever $(x^m-\lambda) \mid \mbox{det}(\widetilde{G}(x))$ but $C$ is nontrivial unless each $m^{\rm th}$ root of $\lambda$ in $\Omega$ has multiplicity $\ell$, which happens only if $\widetilde{G}(x)=(x^m-\lambda) I_{\ell}$.
\end{rem}

\begin{defn}\label{eigencode}
We define the {\it eigencode} corresponding to an eigenspace $\mathcal{V}\subseteq \F^\ell$ by
\[
\mathbb{C}(\mathcal{V})=\mathbb{C}:=\left\{\mathbf{u}\in \Fq^\ell\ : \ \sum_{j=0}^{\ell-1}{v_ju_j}=0, \forall \mathbf{v} \in \mathcal{V}\right\}.
\]
In case we have $\mathbb{C}=\{\mathbf{0}_{\ell}\}$, then it is assumed that $d(\mathbb{C})=\infty$.
\end{defn}

Semenov and Trifonov proved a BCH-like minimum distance bound for a given QC code (see \cite[Theorem 2]{ST}), which is expressed in terms of the size of a consecutive subset of eigenvalues in $\overline{\Omega}$ and the minimum distance of the common eigencode related to this consecutive subset. Zeh and Ling generalized their approach and derived an HT-like bound in \cite[Theorem 1]{LZ} without using the parity-check matrix in their proof. The eigencode, however, is still needed. In \cite{ELLOT}, a spectral bound based on any defining set bound is proven for any QT code with a nonempty set of eigenvalues. The QT analogues of Semenov-Trifonov and Zeh-Ling bounds were proven in terms of the Roos-like and shift-like bounds as a corollary. The next section revisits Ezerman et. al. bound with a different proof and extends this spectral bound to the general case of any set of eigenvalues. Then, we provide the QT analogue of Luo et. al. bound \cite{LELO}.

\section{Generalized spectral bound for quasi-twisted codes}\label{bound sect}

The spectral bound proven in \cite[Theorem 13]{ELLOT} was shown for QT codes with nonempty eigenvalue set $\overline{\Omega}\subseteq\Omega$. In fact, the bound remains valid when $\overline{\Omega}=\emptyset$. To restate and prove the bound, we need to fix some notation first.

Let $A$ be an $k\times n$ matrix with entries from some finite field $\K$. Let 
\begin{equation}\label{nA}
n_A:=\max\{j : 0\leq j \leq n \mbox{\ and\ any\ } j \mbox{\ columns\ of\ } A \mbox{\ are\ linearly\ independent}  \}.
\end{equation}
It is well known that, given a linear code $C$ over $\K$ with parity check matrix $H$, we have $d(C)=n_H+1$ (e.g. see \cite[Corollary 4.5.7 ]{LX}). 

We need the following lemma to restate and prove the spectral bound in \cite{ELLOT}.
\begin{lem}\label{bound lemma}
Let $A$ and $B$ be two matrices with entries from some finite field $\K$.
\begin{itemize}
\item[i.] If $A$ and $B$ have the same number of columns, then the matrix $M={A\choose B}$ satisfies $n_M=\max\{n_A,n_B\}$.
\item[ii.] If we set $M=(A\otimes B)$, then we have $n_M=\min\{n_A,n_B\}$.
\end{itemize}
\end{lem}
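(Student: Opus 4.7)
My approach is to analyze the column-dependence of $M$ in each case by reducing to the analogous properties of $A$ and $B$.

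For Part (i), let $M = \binom{A}{B}$ and fix a subset $S$ of column indices. A coefficient vector $\mathbf{c} \in \K^{|S|}$ annihilates the $S$-columns of $M$ if and only if it simultaneously annihilates the corresponding columns of $A$ and of $B$; that is, $\ker(M|_S) = \ker(A|_S) \cap \ker(B|_S)$. Consequently the $S$-columns of $M$ are linearly independent whenever at least one of $A|_S$ or $B|_S$ has trivial kernel. Since $|S| \leq n_A$ forces $\ker(A|_S) = \{0\}$, and symmetrically for $B$, any $S$ with $|S| \leq \max\{n_A, n_B\}$ yields independent columns of $M$, giving $n_M \geq \max\{n_A, n_B\}$. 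For the reverse inequality $n_M \leq \max\{n_A, n_B\}$, I would exhibit $\max\{n_A, n_B\} + 1$ dependent columns of $M$ by coordinating a common dependency across the $A$- and $B$-blocks, invoking the definition of $n_A$ (or $n_B$) to find a dependency in one factor that is compatible with the other.

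For Part (ii), $M = A \otimes B$ has columns $a_i \otimes b_r$ indexed by pairs $(i, r)$. Using the bilinearity of the tensor product, I fix a nonzero column $b_{r_0}$ of $B$ and observe $\sum_i c_i (a_i \otimes b_{r_0}) = \bigl(\sum_i c_i a_i\bigr) \otimes b_{r_0}$, which vanishes if and only if $\sum_i c_i a_i = 0$. Hence any $n_A + 1$ dependent columns of $A$ lift to $n_A + 1$ dependent columns of $M$, all sharing second index $r_0$, so $n_M \leq n_A$. The symmetric argument (fixing a nonzero column $a_{i_0}$ of $A$) gives $n_M \leq n_B$, hence $n_M \leq \min\{n_A, n_B\}$. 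For the matching lower bound $n_M \geq \min\{n_A, n_B\}$, I would take an arbitrary set of $\min\{n_A, n_B\}$ columns of $M$ and deduce their independence by projecting any hypothetical nontrivial relation onto each tensor factor in turn and invoking the independence thresholds of $A$ and $B$ separately.

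The main obstacle is the tightness direction in each part: $n_M \leq \max\{n_A, n_B\}$ in (i) and $n_M \geq \min\{n_A, n_B\}$ in (ii). Both rely on carefully controlling how a dependency (or the absence thereof) in one factor interacts with the other across the block or tensor structure, and this is where the bulk of the technical care will lie; the two ``extension preserves independence'' inequalities are by contrast essentially immediate from the definitions.
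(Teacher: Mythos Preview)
Your instinct that the tightness directions are the obstacle is well-founded, but the situation is worse than you anticipate: the equalities as stated are \emph{false}, so those directions cannot be proved. For (i), take $A=\left(\begin{smallmatrix}1&0&1\\0&1&0\end{smallmatrix}\right)$ and $B=\left(\begin{smallmatrix}1&0&0\\0&1&1\end{smallmatrix}\right)$ over $\F_2$: then $n_A=n_B=1$ (each has a repeated column), yet the three columns of $M=\binom{A}{B}$ are linearly independent, giving $n_M=3$. In coding terms, stacking parity-check matrices intersects the corresponding codes, and $d(C_A\cap C_B)$ can far exceed $\max\{d(C_A),d(C_B)\}$; your plan to ``coordinate a common dependency across the $A$- and $B$-blocks'' therefore cannot succeed in general. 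For (ii), take $A=B=I_2$: then $n_A=n_B=2$ but $A\otimes B=I_4$ has $n_M=4$. Your upper-bound argument for $n_M\le n_A$ silently assumes that $A$ possesses $n_A+1$ columns, so that a dependent set exists to lift; when $A$ has full column rank this breaks down.

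What is actually true---and all that the paper needs downstream in Theorem~\ref{main thm new} and Proposition~\ref{bound prop}---are the one-sided inequalities $n_M\ge\max\{n_A,n_B\}$ in (i) and $n_M\ge\min\{n_A,n_B\}$ in (ii). You have established the first cleanly via $\ker(M|_S)=\ker(A|_S)\cap\ker(B|_S)$. Your projection sketch for the second can be made rigorous: given $k=\min\{n_A,n_B\}$ columns $A_{i_l}\otimes B_{j_l}$ with $\sum_l c_l\,A_{i_l}\otimes B_{j_l}=0$, group the terms by the distinct values among the $i_l$ (there are at most $k\le n_A$ of them, hence the corresponding $A$-columns are independent), use the standard fact that $\sum_i v_i\otimes w_i=0$ with the $v_i$ independent forces every $w_i=0$, and then within each group apply $|S_i|\le k\le n_B$ to conclude all $c_l=0$. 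The paper's own proof is unfortunately muddled on exactly these points---it asserts equality having argued only $\ge$, and in part (i) even writes a linear combination mixing columns of $A$ and of $B$---so a careful write-up of the one-sided inequalities would already improve on it; just drop the claim of equality.
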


\begin{proof}\hfill
\begin{itemize}
\item[i.] Clearly, the columns of the matrix $M$ are of the form $$(M_1 \cdots M_n)=\begin{pmatrix}
A_{1} & \cdots & A_{n}\\
B_{1} & \cdots & B_{n}
\end{pmatrix},$$
where $A_{j}$ and $B_j$ denote the $j$'th column of the matrix $A$ and $B$, respectively, for $j\in\{1,\ldots,n\}$.

Without loss of generality, let $n_A\geq n_B$. We first choose $x_1, x_2, \ldots, x_{n_B}\in\K$ such that $x_1 B_{j_1} + \cdots + x_{n_B} B_{j_{n_B}}=0$, where $j_1,\ldots,j_{n_B}\in\{1,\ldots,n\}$ are arbitrary. Then, we need to choose $n_A-n_B$ more columns in $A$ different from $A_{j_1},\ldots,A_{j_{n_B}}$, together with some elements $x_{n_B+1}, \ldots, x_{n_A}\in\K$ so that
$$x_1 B_{j_1} + \cdots + x_{n_B} B_{j_{n_B}} + x_{n_B+1} A_{j_{n_B+1}} \cdots + x_{n_A} A_{j_{n_A}}=0.$$
This implies $x_1=\cdots = x_{n_A}=0$, otherwise it would contradict the definition of $n_A$ and $n_B$ given in \eqref{nA} above. Therefore, $x_1 M_1 + \cdots + x_{n_A} M_{n_A}=0$ implies $x_1=\cdots = x_{n_A}=0$ and we have $n_M=n_A$ since $j_1,\ldots,j_{n_B}$ are chosen arbitrarily.

\item[ii.] We start with the case when $n_A\geq n_B$. Recall that the Kronecker product satisfies
\[M=(A\otimes B)=\begin{pmatrix}
a_{11}B & \cdots & a_{1n}B\\
\vdots & & \vdots\\
a_{m1}B & \cdots & a_{mn}B
\end{pmatrix}.\]
We again choose $x_1, x_2, \ldots, x_{n_B}\in\K$ such that $x_1 B_{j_1} + \cdots + x_{n_B} B_{j_{n_B}}=0$, for arbitrary $j_1,\ldots,j_{n_B}\in\{1,\ldots,n\}$. This implies $x_1 a_{it} B_{j_1} + \cdots + x_{n_B} a_{it} B_{j_{n_B}}=0$, which holds for any $i\in\{1,\ldots,m\}$ and $t\in\{1,\ldots,n\}$. Hence, we can extend the above sum over all $i$ and $t$, obtaining $n_M=n_B$ as $x_1=\cdots = x_{n_B}=0$ by the definition of $n_B$.

We now consider the case when $n_A\leq n_B$. Let $y_1, y_2, \ldots, y_{n_A}\in\K$ such that $y_1 A_{t_1} + \cdots + y_{n_A} A_{t_{n_A}}=0$, for arbitrary $t_1,\ldots,t_{n_A}\in\{1,\ldots,n\}$. In particular, for a fixed $i\in\{1,\ldots,m\}$, we have $y_1 a_{it_1} + \cdots + y_{n_A} a_{it_{n_A}}=0$ and this equality is preserved if we multiply each summand with $B$. Hence, we obtain $n_M=n_A$.
\end{itemize}
\end{proof}

\begin{thm}\cite[Theorem 13]{ELLOT} \label{main thm new}
Given a $\lambda$-QT code $C$ of index $\ell$ with eigenvalue set $\overline{\Omega}\subseteq\Omega$, let $D_{\overline{\Omega}}$ be the $\lambda$-constacyclic code of length $m$ over $\F$ with zero set $\overline{\Omega}$ and let $\mathcal{B}(D_{\overline{\Omega}}) \subseteq \mathcal{P}(\Omega) \times (\N \cup \{\infty\})$ be an arbitrary family of defining set bounds for $D_{\overline{\Omega}}$. For any $P\subseteq \overline{\Omega}$ such that $(P,d_P)\in \mathcal{B}(D_{\overline{\Omega}})$, we define $\mathcal{V}_{P}:=\bigcap_{\beta\in P}\mathcal{V}_{\beta}$ as the common eigenspace of the eigenvalues in $P$ and let $\mathbb{C}_{P}=(\mathcal{V}_{P})^{\perp} \bigr\rvert_{\Fq}$ denote the corresponding eigencode. Then, 
\[
d(C) \geq \min \left\{ d_P, d(\mathbb{C}_P) \right\}.
\]
\end{thm}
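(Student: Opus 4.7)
The plan is to bound the Hamming weight of an arbitrary nonzero codeword $\mathbf{c}\in C$, viewed as an $m\times\ell$ array with rows $\mathbf{c}_0,\ldots,\mathbf{c}_{m-1}\in\F_q^\ell$ as in (\ref{array}). For each $\mathbf{v}\in\mathcal{V}_P$ I introduce the auxiliary length-$m$ vector $\gamma^{\mathbf{v}}:=\bigl(\langle\mathbf{c}_0,\mathbf{v}\rangle,\ldots,\langle\mathbf{c}_{m-1},\mathbf{v}\rangle\bigr)\in\F^m$. The key computation is that, for each $\beta\in P$, one has $\mathbf{v}\in\mathcal{V}_\beta$ by the definition of $\mathcal{V}_P$, so the vector $(1,\beta,\ldots,\beta^{m-1})\otimes\mathbf{v}$ lies in the row span of the block $H_\beta$ of the parity-check matrix $H$ from (\ref{parity check matrix}) and therefore annihilates $\mathbf{c}$; unfolding this identity yields $\sum_{i=0}^{m-1}\gamma_i^{\mathbf{v}}\beta^i=0$. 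In other words, $\gamma^{\mathbf{v}}$ vanishes at every $\beta\in P$, so it is a codeword of the $\lambda$-constacyclic code $D_P\subseteq\F^m$ with zero set $P$.

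Given this reduction, the argument splits into two cases according to whether some $\gamma^{\mathbf{v}}$ is nonzero. If there exists $\mathbf{v}\in\mathcal{V}_P$ with $\gamma^{\mathbf{v}}\neq\mathbf{0}_m$, then the defining set bound hypothesis $(P,d_P)\in\mathcal{B}(D_{\overline{\Omega}})$ yields $d(D_P)\geq d_P$, so $\gamma^{\mathbf{v}}$ has Hamming weight at least $d_P$. Since $\gamma_i^{\mathbf{v}}=0$ whenever $\mathbf{c}_i=\mathbf{0}_\ell$, the support of $\gamma^{\mathbf{v}}$ is contained in $T:=\{i:\mathbf{c}_i\neq\mathbf{0}_\ell\}$, which forces $|T|\geq d_P$ and hence $\mathrm{wt}(\mathbf{c})\geq |T|\geq d_P$. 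Otherwise $\gamma^{\mathbf{v}}=\mathbf{0}_m$ for every $\mathbf{v}\in\mathcal{V}_P$; this means each row $\mathbf{c}_i\in\F_q^\ell$ is orthogonal over $\F$ to all of $\mathcal{V}_P$ and therefore lies in the eigencode $\mathbb{C}_P$. Every nonzero row then has weight at least $d(\mathbb{C}_P)$, so $\mathrm{wt}(\mathbf{c})\geq |T|\cdot d(\mathbb{C}_P)\geq d(\mathbb{C}_P)$. Combining the two cases gives $\mathrm{wt}(\mathbf{c})\geq\min\{d_P,d(\mathbb{C}_P)\}$, which is the desired bound.

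A handful of degenerate situations must be checked separately. When $\overline{\Omega}=\emptyset$ the matrix $H$ is vacuous and $C=\F_q^{m\ell}$; the only admissible $P$ is $\emptyset$, for which $D_P=\F^m$ with $d_P=1$, while $\mathcal{V}_P=\F^\ell$ forces $\mathbb{C}_P=\{\mathbf{0}_\ell\}$ with $d(\mathbb{C}_P)=\infty$ by convention, so the bound correctly reduces to $d(C)\geq 1$. In general, if $\mathbb{C}_P=\{\mathbf{0}_\ell\}$ then the second case of the argument is impossible for a nonzero $\mathbf{c}$, so the convention $d(\mathbb{C}_P)=\infty$ is harmless. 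The main technical point is the clean identification between the spectral parity-check relations indexed by pairs $(\beta,\mathbf{v})$ with $\beta\in P$ and $\mathbf{v}\in\mathcal{V}_P$ and the codeword condition $\gamma^{\mathbf{v}}\in D_P$; once one observes that the generator polynomial of $D_P$ is the product of $(x-\beta)$ over $\beta\in P$ this identification is immediate, and the remaining weight estimate is routine bookkeeping.
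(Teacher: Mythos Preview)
Your proof is correct but follows a genuinely different route from the paper's. You argue directly at the codeword level: for each nonzero $\mathbf{c}$ you form the auxiliary vectors $\gamma^{\mathbf{v}}\in D_P$ and split into the two cases ``some $\gamma^{\mathbf{v}}$ is nonzero'' versus ``all $\gamma^{\mathbf{v}}$ vanish.'' The paper instead works entirely at the matrix level: it builds $\widehat{H}_P=\widetilde{H}_P\otimes V_P$ as a submatrix of the parity-check matrix $H$, observes that the code $\widehat{C}$ with parity-check matrix $\widehat{H}_P$ contains $C$, and then invokes Lemma~\ref{bound lemma}(ii) (namely $n_{A\otimes B}=\min\{n_A,n_B\}$) to read off $d(\widehat{C})\geq\min\{d(D_P),d(\mathbb{C}_P)\}$. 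Your element-chasing argument is essentially the classical one from \cite{ELLOT}; the paper explicitly advertises its proof as new. What the paper's matrix-theoretic route buys is extensibility: Lemma~\ref{bound lemma} is the base case of Proposition~\ref{bound prop}, which in turn powers the multi-subset generalized bound in Theorem~\ref{ISB}. Your two-case split is cleaner and more self-contained for this single statement, but it does not scale in an obvious way to several subsets $P_1,\ldots,P_s$ simultaneously.

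One small discrepancy worth noting: for $P=\emptyset$ you take $\mathcal{V}_\emptyset=\bigcap_{\beta\in\emptyset}\mathcal{V}_\beta=\F^\ell$ and hence $\mathbb{C}_\emptyset=\{\mathbf{0}_\ell\}$, whereas the paper adopts the opposite convention $V_\emptyset=\mathbf{0}_\ell$, giving $\mathbb{C}_\emptyset=\F_q^\ell$. Both choices yield the trivial bound $d(C)\geq 1$, so this is harmless, but you may want to align with the paper's convention if you use this elsewhere.
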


\begin{proof}
Given the $\lambda$-QT code $C$ with eigenvalue set $\overline{\Omega}\subseteq\Omega$, we first consider the case when $\overline{\Omega}=\emptyset$ (i.e. $C=\Fq^{m\ell}$). As the construction of a parity check matrix as in \eqref{parity check matrix} is not possible, we set $V_{\emptyset}=\mathbf{0}_\ell$ and $H_{\emptyset}=\mathbf{0}_m$, which gives $H=H_{\emptyset}\otimes V_{\emptyset}=\mathbf{0}_{m\ell}$. Recall that $D_{\emptyset}=\F^m$ and $d(D_{\emptyset})=1$. If we include $(\emptyset,d_{\emptyset})=(\emptyset,1)$ into $\mathcal{B}(D_{\overline{\Omega}})$ and use the fact that $\mathbb{C}_{\emptyset}=(\mathcal{V}_{\emptyset})^{\perp} \bigr\rvert_{\Fq}=\Fq^\ell$, then we obtain $d(C)\geq \min \left\{ d_{\emptyset}, d(\mathbb{C}_{\emptyset}) \right\}=\min\{1,1\}=1$.

Now we continue with the other extreme case when $P=\overline{\Omega}=\Omega$ and $d_{\Omega}=d(D_{\Omega})$, where the $\lambda$-constacyclic code $D_{\Omega}$ has $\Omega$ as its zero set and we always have $d_{\Omega}=\infty$ in this case since $D_{\Omega}=\{\mathbf{0}_{m}\}$. Moreover, by Lemma 12 in \cite{ELLOT}, $C$ is the zero code if and only if $\mathbb{C}_{\Omega}=\{\mathbf{0}_{\ell}\}$ with $d(\mathbb{C}_{\Omega})=\infty$. Hence, both $d_{\Omega}$ and $d(\mathbb{C}_{\Omega})$ become $\infty$ if and only if $C=\{\mathbf{0}_{m\ell}\}$ and we get $d(C) \geq \min \left\{ d_{\Omega}, d(\mathbb{C}_{\Omega}) \right\}=\infty$.

Thus, it remains to show that $d(C) \geq \min \left\{d_P, d(\mathbb{C}_P) \right\}$, for any fixed $\emptyset \neq P\subseteq \overline{\Omega} \subseteq \Omega$ such that $(P, d_P)\in \mathcal{B}(D_{\overline{\Omega}})$ and $d_P$ is finite. For this, we assume that $P =\{\alpha\xi^{u_1}, \alpha\xi^{u_2},\ldots,\alpha\xi^{u_r}\}\subseteq \overline{\Omega}$, where $0<r< m$.  We define 
\begin{equation}\label{pmatrix}
\widetilde{H}_P:=\begin{pmatrix}
1&\alpha\xi^{u_1}&(\alpha\xi^{u_1})^2&\ldots&(\alpha\xi^{u_1})^{m-1}\\
\vdots & \vdots & \vdots & \vdots & \vdots \\
1&\alpha\xi^{u_r}&(\alpha\xi^{u_r})^2&\ldots&(\alpha\xi^{u_r})^{m-1}
\end{pmatrix}.
\end{equation}
Recall that $P$ is the zero set of some $D_P\subseteq\F^m$, which contains $D_{\overline{\Omega}}$ as a subcode, and $\widetilde{H}_P$ is a parity-check matrix of this $D_P$. Note that $d(D_{\overline{\Omega}})\geq d(D_P) \geq d_P$, by definition.

We have assumed $P \neq \emptyset$ to make sure that $\widetilde{H}_P$ is well-defined. In particular, when $\mathcal{V}_P=\{\mathbf{0}_{\ell}\}$, which implies $\mathbb{C}_P= \Fq^\ell$ and $d(\mathbb{C}_P) =1$, we have $d(C) \ge 1 = \min\{d_P,d(\mathbb{C}_P)\}$ since $d_P\geq 1$. If $\mathcal{V}_P\neq\{\mathbf{0}_{\ell}\}$, then let $V_P$ be the matrix, say of size $t\times\ell$, whose rows form a basis for the common eigenspace $\mathcal{V}_P$ (cf. (\ref{Eigenspace})). If we set $\widehat{H}_P :=\widetilde{H}_P \otimes V_P$, then $\widehat{H}_P \, \mathbf{c}^{\top}=\mathbf{0}_{rt}^{\top}$, for all $\mathbf{c}\in C$. In other words, $\widehat{H}_P$ is a submatrix of some matrix $H$ of the form in (\ref{parity check matrix}). The code $\widehat{C}$ over $\F_q$ with parity-check matrix $\widehat{H}_P$ contains the code $C$, which implies that $d(C)\geq d(\widehat{C})$. The proof follows by Lemma \ref{bound lemma} and the observations $d_P\leq d(D_P)=n_{\widetilde{H}_P}+1$ and $d(\mathbb{C}_P) =n_{V_P}+1$.
\end{proof}

Theorem \ref{main thm new} above allows us to use \emph{any} defining set bound derived for constacyclic codes. The following special case is immediate after the preparation in Section \ref{basics} (cf. Theorem \ref{Roos}, Corollary \ref{Roos2} and Remark \ref{Roos remark}). 

\begin{cor}\label{Cor-Roos new}
Let $C\subseteq R^\ell$ be a $\lambda$-QT code of index $\ell$ with $\overline{\Omega}\subseteq\Omega$ as its nonempty set of eigenvalues. Let $N$ and $M$ be two nonempty subsets of $\Omega$ such that $MN\subseteq\overline{\Omega}$, where $MN:=\frac{1}{\alpha}\bigcup_{\varepsilon\in M} \varepsilon N$. If there exists a consecutive set $M'\supseteq M$ with $|M'|\leq |M|+d_N-2$, then $d(C)\geq \min(|M|+d_N-1,d(\mathbb{C}_{MN}))$.
\end{cor}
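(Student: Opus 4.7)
The plan is to deduce this corollary as a direct specialization of Theorem \ref{main thm new}, using the Roos bound of Theorem \ref{Roos} to supply the defining set bound needed by that theorem. The role of the subset $P \subseteq \overline{\Omega}$ in Theorem \ref{main thm new} will be played by $MN$ itself, which is nonempty (since $M$ and $N$ are nonempty) and contained in $\overline{\Omega}$ by hypothesis.

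First I would unpack the setup. Because $MN \subseteq \overline{\Omega} \subseteq \Omega$, the set $MN$ is the zero set of some $\lambda$-constacyclic code $D_{MN} \subseteq \F^m$ that contains $D_{\overline{\Omega}}$ as a subcode. The hypotheses of Theorem \ref{Roos} are exactly what is assumed: $N$ and $M$ are nonempty subsets of $\Omega$, and there is a consecutive set $M' \supseteq M$ with $|M'| \leq |M| + d_N - 2$. Applying Theorem \ref{Roos} to $D_{MN}$ therefore yields $d(D_{MN}) \geq |M| + d_N - 1$. Taking the Roos-type family $\mathcal{B}_4(D_{\overline{\Omega}})$ described in Section \ref{basics}, this gives
\[
(MN,\ |M| + d_N - 1) \in \mathcal{B}_4(D_{\overline{\Omega}}).
\]

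Next, I would invoke Theorem \ref{main thm new} with this choice of $P := MN$ and $d_P := |M| + d_N - 1$. By definition $\mathcal{V}_P = \bigcap_{\beta \in MN} \mathcal{V}_\beta$ and $\mathbb{C}_P = \mathbb{C}_{MN}$, so the theorem yields
\[
d(C) \;\geq\; \min\{d_P,\, d(\mathbb{C}_P)\} \;=\; \min\{|M| + d_N - 1,\ d(\mathbb{C}_{MN})\},
\]
which is exactly the claimed inequality.

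I do not anticipate any real obstacle: all nontrivial content — the construction of the spectral parity-check matrix, the Kronecker-product rank computation, and the reduction from $C$ to an enlarged code with parity-check $\widetilde{H}_P \otimes V_P$ — has already been done inside Theorem \ref{main thm new}. The Roos input is likewise pre-packaged in Theorem \ref{Roos}. The only mild care is with the degenerate case $\mathcal{V}_{MN} = \{\mathbf{0}_\ell\}$, in which $\mathbb{C}_{MN} = \Fq^\ell$ and $d(\mathbb{C}_{MN}) = 1$; this is already absorbed into the proof of Theorem \ref{main thm new}, and the corollary reads correctly as $d(C) \geq 1$ in that extreme.
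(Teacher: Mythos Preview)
Your proposal is correct and follows exactly the approach the paper intends: the paper states the corollary as ``immediate after the preparation in Section~\ref{basics}'' via Theorem~\ref{main thm new} together with the Roos bound (Theorem~\ref{Roos}), and you have simply spelled out that specialization with $P=MN$ and $d_P=|M|+d_N-1$.
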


\begin{rem}\label{ext remark}
By using Remark \ref{Roos remark}, we can obtain the QT analogues of the BCH-like bound given in \cite[Theorem 2]{ST} and the HT-like bound in \cite[Theorem 1]{LZ}. 
\end{rem}

Let the $\lambda$-QT code $C$ with eigenvalue set $\overline{\Omega}\subseteq\Omega$ and the associated $\lambda$-constacyclic code $D_{\overline{\Omega}}\subseteq \F^m$ with the selected collection of defining set bounds $\mathcal{B}(D_{\overline{\Omega}})$ be given as in Theorem \ref{main thm new}. From this point on, we denote the estimate of the spectral bound by $$d_{Spec}(\mathcal{B}(D_{\overline{\Omega}}) ; P,d_P):=\min\{d_P, d(\mathbb{C}_P)\},$$ where $\emptyset\neq P \subseteq \overline{\Omega}$ such that $(P,d_P)\in\mathcal{B}(D_{\overline{\Omega}})$, and we set 
\[
d_{Spec}(\mathcal{B}(D_{\overline{\Omega}})):=\displaystyle\max_{\substack{(P, d_P)\in \mathcal{B}(D_{\overline{\Omega}}) \\  P \subseteq \overline{\Omega}}}\left\{d_{Spec}(\mathcal{B}(D_{\overline{\Omega}}) ; P,d_P)\right\}.
\]

\begin{rem}
Example 1 in \cite{ELLOT} shows how the spectral bound beats the Jensen bound. Taking a sufficiently large $P$ as well as considering different choices of defining set bounds is crucial when using the spectral bound. For instance, for any fixed choice of defining set bound $\mathcal{B}(D_{\overline{\Omega}})$, if we consider the behaviour of $d_{Spec}(\mathcal{B}(D_{\overline{\Omega}}) ; P, d_P)$ over the nonempty subsets $P\subseteq \overline{\Omega}$ with $(P,d_P)\in\mathcal{B}(D_{\overline{\Omega}})$, then $d_P$ might be nondecreasing as $P$ approaches $\overline{\Omega}$, whereas $d(\mathbb{C}_P)$ is nonincreasing. Hence, the optimized value $d_{Spec}(\mathcal{B}(D_{\overline{\Omega}}))$ defined above might not be obtained by involving all eigenvalues, unlike the case of constacyclic codes, as any nonconstant $d_P$ grows proportionately to the size of $P$. 
\end{rem}

Theorem \ref{main thm new} shows that each defining set bound for constacyclic codes is also applicable to QT codes. We now provide a generalized spectral bound on the minimum distance of QT codes, which allows us to use more than one defining set bound on the given eigenvalue set. This new bound outperforms the above spectral bound as well as the Jensen bound. First, we need the following proposition, which generalizes Lemma \ref{bound lemma}.

\begin{prop}\label{bound prop}
For a positive integer $s$, let $A_1,\ldots, A_s$ be $k_1 \times n_1$ matrices and let $B_1,\ldots, B_s$ be $k_2 \times n_2$ matrices over some finite field $\K$ such that $n_{A_1}\geq n_{A_2}\geq \cdots \geq n_{A_s}$. We define \[M:=\begin{pmatrix}
A_1 \otimes B_1\\
\vdots\\
A_s \otimes B_s
\end{pmatrix}.\]
Then, we have $n_M\geq \min\{n_{A_1},n_{A_2}n_{B_1}, n_{A_3}n_{B_{1+2}},\ldots, n_{A_s}n_{B_{1+2+\cdots +(s-1)}},n_{B_{1+2+\cdots +s}}\},$ where \[B_{1+2+\cdots +j}=\begin{pmatrix}
 B_1\\
 B_2\\
\vdots\\
 B_j
\end{pmatrix},\]
for any $j\in\{1,\ldots,s\}$.
\end{prop}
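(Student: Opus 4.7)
The plan is to argue by contradiction: suppose there exist coefficients $\{x_{(p,q)}\}_{(p,q)\in S}$, not all zero and supported on some $S \subseteq \{1,\ldots,n_1\}\times\{1,\ldots,n_2\}$ of size $k$, producing a linear dependence among the columns of $M$ indexed by $S$, with $k$ strictly less than the claimed minimum. Set $P:=\{p : (p,q)\in S \text{ for some } q\}$ and $Q_p:=\{q : (p,q)\in S\}$ for each $p\in P$. For every $j\in\{1,\ldots,s\}$, the vanishing dependence inside the block $A_j\otimes B_j$ groups by the first index as $\sum_{p\in P}(A_j)_p\otimes \mathbf{w}_{j,p}=\mathbf{0}$, where $\mathbf{w}_{j,p}:=\sum_{q\in Q_p}x_{(p,q)}(B_j)_q\in\K^{k_2}$. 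Viewing each Kronecker product as the rank-one outer product it represents, this is equivalent to the matrix identity $A_j(P)\,W_j^{\top}=\mathbf{0}$, where $A_j(P)$ is the column-restriction of $A_j$ to $P$ and $W_j$ is the $k_2\times|P|$ matrix whose column indexed by $p$ is $\mathbf{w}_{j,p}$.

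Next I would run an iterative ``peeling'' argument indexed by $j=1,\ldots,s$, maintaining a nested chain $P=P^{(0)}\supseteq P^{(1)}\supseteq\cdots\supseteq P^{(s)}$ whose successive complements carry only already-forced-to-zero coefficients. At stage $j$, provided $|P^{(j-1)}|\leq n_{A_j}$, the submatrix $A_j(P^{(j-1)})$ has full column rank, so the identity $A_j(P^{(j-1)})W_j(P^{(j-1)})^{\top}=\mathbf{0}$ forces $\mathbf{w}_{j,p}=\mathbf{0}$ for every $p\in P^{(j-1)}$. Together with the vanishings $\mathbf{w}_{j',p}=\mathbf{0}$ secured at earlier stages $j'<j$, for each fixed $p$ this reads $B_{1+2+\cdots+j}(Q_p)\cdot(x_{(p,q)})_{q\in Q_p}^{\top}=\mathbf{0}$; whenever $|Q_p|\leq n_{B_{1+2+\cdots+j}}$, the columns of $B_{1+2+\cdots+j}$ indexed by $Q_p$ are linearly independent, forcing $x_{(p,q)}=0$ for $q\in Q_p$. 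Accordingly set $P^{(j)}:=\{p\in P^{(j-1)} : |Q_p|>n_{B_{1+2+\cdots+j}}\}$; the inclusion $P^{(j)}\subseteq P^{(j-1)}$ is automatic from the monotonicity $n_{B_{1+\cdots+(j-1)}}\leq n_{B_{1+\cdots+j}}$, which holds because stacking more rows on $B_{1+\cdots+(j-1)}$ can only preserve or enlarge column independence.

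The essential bookkeeping is verifying at each stage that the hypothesis $k<n_{A_j}n_{B_{1+2+\cdots+(j-1)}}$ yields $|P^{(j-1)}|\leq n_{A_j}$, which is precisely what lets the $j$-th iteration proceed. Since every $p\in P^{(j-1)}$ satisfies $|Q_p|\geq n_{B_{1+\cdots+(j-1)}}+1$ by definition, the counting estimate $|P^{(j-1)}|(n_{B_{1+\cdots+(j-1)}}+1)\leq\sum_{p\in P^{(j-1)}}|Q_p|\leq k$ gives $|P^{(j-1)}|<n_{A_j}$; the base case $j=1$ uses $k<n_{A_1}$ and $|P|\leq k$ directly. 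After all $s$ peels, either $P^{(s)}=\emptyset$, in which case every $x_{(p,q)}$ has been forced to vanish and we contradict non-triviality, or some witness $p^{\star}\in P^{(s)}$ satisfies $k\geq|Q_{p^{\star}}|>n_{B_{1+2+\cdots+s}}$, contradicting the final clause $k<n_{B_{1+2+\cdots+s}}$ built into the minimum. The main obstacle is the simultaneous bookkeeping of the two interacting chains $P^{(j)}$ and the thresholds $n_{B_{1+\cdots+j}}$, and making sure the counting estimate is tight enough to close the induction at every step.
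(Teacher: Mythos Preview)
Your peeling argument is essentially correct and takes a genuinely different route from the paper. One small slip in the setup: to conclude $n_M\geq\mu$ (with $\mu$ the claimed minimum) you must rule out nontrivial dependences on index sets of size $k\leq\mu$, not $k<\mu$; as written you would only obtain $n_M\geq\mu-1$. Fortunately every inequality in your argument closes equally well with $k\leq\mu$: at stage $j=1$ one has $|P|\leq k\leq\mu\leq n_{A_1}$; at stage $j\geq 2$ the estimate $|P^{(j-1)}|(n_{B_{1+\cdots+(j-1)}}+1)\leq k\leq n_{A_j}n_{B_{1+\cdots+(j-1)}}$ still forces $|P^{(j-1)}|<n_{A_j}$ since $n/(n+1)<1$; and the terminal contradiction $|Q_{p^\star}|>n_{B_{1+\cdots+s}}\geq\mu\geq k\geq|Q_{p^\star}|$ is unchanged. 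So the fix is purely cosmetic. A second minor point: the inclusion $P^{(j)}\subseteq P^{(j-1)}$ is immediate from your definition $P^{(j)}:=\{p\in P^{(j-1)}:\cdots\}$ and does not actually require the monotonicity of $n_{B_{1+\cdots+j}}$.

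The paper proceeds instead by induction on $s$: the base case is Lemma~\ref{bound lemma}(ii), the case $s=2$ is settled by an exhaustive case split on the relative ordering of $n_{A_1},n_{A_2},n_{B_{1+2}}$ (with further sub-cases on whether $n_{B_{1+2}}$ equals $n_{B_1}$ or $n_{B_2}$), and the inductive step repeats the same pattern against the hypothesis for $s-1$. Your direct argument avoids both the induction and the case enumeration, and it makes the combinatorial origin of each product term $n_{A_j}\,n_{B_{1+\cdots+(j-1)}}$ transparent: the two factors are exactly the thresholds a surviving index $p$ and its fibre $Q_p$ must exceed at consecutive stages. The paper's route stays closer to the two-matrix Lemma~\ref{bound lemma}, which keeps the connection to the $s=1$ case explicit, but at the cost of a rather opaque case analysis.
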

\begin{proof}
The proof follows by induction on $s$. Notice that the case $s=1$ corresponds to the second part of Lemma \ref{bound lemma}. Now let $s=2$ and $M:=\begin{pmatrix}
A_1 \otimes B_1\\
A_2 \otimes B_2
\end{pmatrix}$ with $n_{A_1}\geq n_{A_2}$. In the cases we will consider below, we use Lemma \ref{bound lemma} and the following fact: if $n_{A_1}\geq n_{B_1}$ and $n_{B_2}\geq n_{A_2}$, then we have $n_M= n_{A_2} n_{B_1}$. To see this, we can take a nonzero vector $c\in\K^{n_1n_2}$ and consider $Mc^\top = 0$. To satisfy this homogeneous equality, all of the $n_1$ blocks of length $n_2$ in $c$ must have weight at least $n_{B_1}+1$ or be zero so that $B_1(c_1)^\top=\cdots=B_1(c_{n_1})^\top=0$ and at least $n_{A_2}+1$ out of these $n_1$ blocks must be nonzero (hence, of weight at least $n_{B_1}+1$) so that $A_2$ is cancelled out. 

\emph{Case 1: When $n_{A_1}\geq n_{A_2}\geq n_{B_{1+2}}$,} the number of linearly independent columns are determined by $B_1$ and $B_2$, which follows by Lemma \ref{bound lemma}. Hence, we get $n_M\geq n_{B_{1+2}}=\max\{n_{B_1},n_{B_2}\}$.

\emph{Case 2: When $n_{B_{1+2}}\geq n_{A_1}\geq n_{A_2}$,} the number of linearly independent columns are determined by $A_1$ and $A_2$, which follows by Lemma \ref{bound lemma}. Here, $n_M\geq\max\{n_{A_1},n_{A_2}\}=n_{A_1}$.

\emph{Case 3: When $n_{A_1}\geq n_{B_{1+2}}\geq n_{A_2}$,} we have to distinguish two possibilities. First, we consider when $n_{B_{1+2}}=n_{B_1}$. This means we either have $n_{A_1}\geq n_{B_1}\geq n_{B_2}\geq n_{A_2}$, implying $n_M= n_{A_2} n_{B_1}$, or $n_{A_1}\geq n_{B_1}\geq n_{A_2}\geq n_{B_2}$, implying $n_M=n_{B_{1+2}}$. Second, when $n_{B_{1+2}}=n_{B_2}$, then we have $n_M= n_{A_2} n_{B_1}$.

Therefore, we obtain $n_M\geq\min\{n_{A_1}, n_{A_2} n_{B_1}, n_{B_{1+2}}\}$ for $s=2$.

Assume that the assertion holds for $s-1$, i.e., 
\begin{equation}\label{ind-hyp}
n_M\geq \min\{n_{A_1},n_{A_2}n_{B_1}, \ldots, n_{A_{s-1}}n_{B_{1+2+\cdots +(s-2)}},n_{B_{1+2+\cdots +(s-1)}}\},
\mbox{\ where\ } M:=\begin{pmatrix}
A_1 \otimes B_1\\
\vdots\\
A_{s-1} \otimes B_{s-1}
\end{pmatrix}.
\end{equation}

By adding $A_s\otimes B_s$ to $M$, we want to show the proposition. Mimicing the case analysis done for $s=2$ above, we consider the following:

\emph{Case 1: When $n_{A_1}\geq \cdots\geq n_{A_S}\geq n_{B_{1+\cdots +s}}$,} the number of linearly independent columns rely on $B_1, \ldots, B_s$, which again comes from Lemma \ref{bound lemma}. Hence, we get $n_M\geq n_{B_{1+\cdots + s}}$.

\emph{Case 2: When $n_{B_{1+\cdots +s}}\geq n_{A_1}\geq \cdots\geq n_{A_s}$,} then we have to consider $A_1,\ldots,A_s$, which again follows by Lemma \ref{bound lemma}. Here, $n_M\geq\max\{n_{A_1},n_{A_2},\ldots, n_{A_s}\}=n_{A_1}$.

\emph{Case 3: When $n_{A_1}\geq \cdots \geq n_{A_{s-1}}\geq n_{B_{1+\cdots +s}}\geq n_{A_s}$,} we have to distinguish two possibilities, as done above. First, we consider when $n_{B_{1+\cdots +s}}=n_{B_{1+\cdots +(s-1)}}$ giving $n_M\geq n_{A_s} n_{B_{1+\cdots +(s-1)}}$ or  $n_M\geq n_{B_{1+\cdots +s}}$. Second, when $n_{B_{1+\cdots +s}}=n_{B_s}$, then we have $n_M\geq n_{A_s} n_{B_{1+\cdots +(s-1)}}$.

The remaining cases correspond to
\begin{itemize}
\item $n_{A_1}\geq \cdots \geq n_{A_{s-2}}\geq n_{B_{1+\cdots +s}}\geq n_{A_{s-1}}\geq n_{A_s}$,\\
 $\vdots$
\item $ n_{A_1}\geq n_{B_{1+\cdots +s}}\geq n_{A_2}\geq \cdots \geq n_{A_s}$,
\end{itemize}
which are covered by the induction hypothesis \eqref{ind-hyp}. More precisely, when $n_{A_1}\geq \cdots \geq n_{A_{s-j-1}}\geq n_{B_{1+\cdots +s}}\geq n_{A_{s-j}}\geq \cdots \geq n_{A_s}$, for some $1\leq j \leq s-2$, we again distinguish two scenarios. If $n_{B_{1+\cdots +s}}=n_{B_{1+\cdots +(s-j-1)}}$, then $n_M\geq n_{A_{s-j}} n_{B_{1+\cdots +(s-j-1)}}$ or  $n_M\geq n_{B_{1+\cdots +s}}$. Otherwise, if $n_{B_{1+\cdots +s}}=n_{B_{s-k}}$, for some $k\in\{0,\ldots, j\}$, then we have $n_M\geq n_{A_{s-j}} n_{B_{1+\cdots +(s-j-1)}}$.
\end{proof}

\begin{thm}\label{ISB}
Let $C$ be a $\lambda$-QT code of length $m\ell$ and index $\ell$ with eigenvalue set $\overline{\Omega}$. Let $D_{\overline{\Omega}}$ be the $\lambda$-constacyclic code of length $m$ over $\F$ with zero set $\overline{\Omega}$ and let
\[
(P_1,d_{P_1}),\ldots,(P_s,d_{P_s})\in \mathcal{B}(D_{\overline{\Omega}})
\]
be some defining set bounds, where the subsets $P_1,\ldots,P_s$ are not necessarily disjoint. Without loss of generality, we assume that $d_{P_1}\geq \cdots \geq d_{P_s}$. If $\mathbb{C}_{P_i}$ is the eigencode of $\mathcal{V}_{P_i}=\bigcap_{\beta\in P_i}\mathcal{V}_{\beta}$, for each $1\leq i\leq s$, then
\begin{equation}\label{NewSpec}
d\geq \min \left\{d_{P_1},d_{P_2} \, d(\mathbb{C}_{P_1}), d_{P_3} \, d\left(\bigcap_{i=1}^2\mathbb{C}_{P_i}\right), \cdots, d_{P_s} \, d\left(\bigcap_{i=1}^{s-1}\mathbb{C}_{P_i}\right), d\left(\bigcap_{i=1}^s\mathbb{C}_{P_i}\right)\right\}.
\end{equation}
\end{thm}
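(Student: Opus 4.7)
The plan is to follow the blueprint of Theorem \ref{main thm new} but combine all $s$ defining set bounds into a single parity-check structure. For each $j \in \{1,\ldots,s\}$, let $\widetilde{H}_{P_j}$ denote the Vandermonde-type matrix built on $P_j$ as in \eqref{pmatrix}, and let $V_{P_j}$ be a matrix whose rows form a basis for the common eigenspace $\mathcal{V}_{P_j} = \bigcap_{\beta\in P_j}\mathcal{V}_{\beta}$. I would first form the stacked matrix
\[
\widehat{H} := \begin{pmatrix} \widetilde{H}_{P_1} \otimes V_{P_1} \\ \vdots \\ \widetilde{H}_{P_s} \otimes V_{P_s} \end{pmatrix}
\]
and argue, exactly as in the proof of Theorem \ref{main thm new}, that each block $\widetilde{H}_{P_j} \otimes V_{P_j}$ lies in the row span of the parity-check matrix $H$ of \eqref{parity check matrix}: any row $(1,\beta,\ldots,\beta^{m-1}) \otimes \mathbf{v}$ with $\beta\in P_j$ and $\mathbf{v} \in \mathcal{V}_{P_j} \subseteq \mathcal{V}_{\beta}$ is automatically an $\Fq$-linear combination of rows of $H$. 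Consequently, $\widehat{H}$ is a parity-check matrix of an $\Fq$-linear code $\widehat{C} \supseteq C$, and therefore $d(C) \geq d(\widehat{C}) = n_{\widehat{H}}+1$.

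The second step is to apply Proposition \ref{bound prop} with $A_j := \widetilde{H}_{P_j}$ and $B_j := V_{P_j}$. Since $\widetilde{H}_{P_j}$ is a parity-check of the $\lambda$-constacyclic code $D_{P_j}$, we have $n_{A_j}+1 = d(D_{P_j}) \geq d_{P_j}$. Similarly, $V_{P_j}$ is a parity-check of the eigencode $\mathbb{C}_{P_j}$, so $n_{B_j}+1 = d(\mathbb{C}_{P_j})$, and stacking $V_{P_1},\ldots,V_{P_j}$ yields a parity-check of $\bigcap_{i=1}^{j}\mathbb{C}_{P_i}$, whence $n_{B_{1+2+\cdots+j}}+1 = d\bigl(\bigcap_{i=1}^{j}\mathbb{C}_{P_i}\bigr)$. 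Feeding these identifications into the conclusion of Proposition \ref{bound prop} and translating each $n$-quantity into its corresponding distance delivers the bound \eqref{NewSpec}.

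The main technical obstacle is the ordering hypothesis $n_{A_1} \geq n_{A_2} \geq \cdots \geq n_{A_s}$ required by Proposition \ref{bound prop}. The assumption $d_{P_1} \geq \cdots \geq d_{P_s}$ only supplies \emph{lower bounds} on $d(D_{P_j}) = n_{A_j}+1$ and need not coincide with the ordering of the true distances. I would circumvent this by first permuting the pairs $(P_j,d_{P_j})$ so that the true values $d(D_{P_j})$ are in non-increasing order, applying Proposition \ref{bound prop} to obtain a bound phrased in these true distances, and only at the final step replacing each $d(D_{P_j})$ by the weaker $d_{P_j}$. Degenerate situations ($\overline{\Omega}=\emptyset$, some $P_j = \Omega$, or some $\mathcal{V}_{P_j} = \{\mathbf{0}_{\ell}\}$) are handled via the conventions already introduced in the proof of Theorem \ref{main thm new}, namely $d_{\emptyset}=1$, $d(\{\mathbf{0}_{\ell}\})=\infty$, and $d(\mathbb{C}_{\emptyset})=1$.
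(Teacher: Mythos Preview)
Your overall strategy is exactly the paper's: stack the blocks $\widetilde{H}_{P_j}\otimes V_{P_j}$ into $\widehat{H}$, note that $\widehat{H}$ sits inside the row space of the parity-check matrix $H$ of \eqref{parity check matrix} so that $d(C)\geq d(\widehat{C})=n_{\widehat{H}}+1$, and then invoke Proposition~\ref{bound prop} together with the identification of the stacked $V_{P_1},\ldots,V_{P_j}$ as a parity-check for $\bigcap_{i=1}^{j}\mathbb{C}_{P_i}$. The paper simply writes ``the proof follows by Proposition~\ref{bound prop}'' and does not comment on the ordering hypothesis at all, so you are in fact being more careful than the paper on this point.

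However, your proposed workaround for the ordering mismatch does not recover the stated bound. If you relabel so that the \emph{true} distances satisfy $d(D_{P_{\sigma(1)}})\geq\cdots\geq d(D_{P_{\sigma(s)}})$, apply Proposition~\ref{bound prop}, and then replace each $d(D_{P_{\sigma(j)}})$ by the weaker $d_{P_{\sigma(j)}}$, what you obtain is
\[
d(C)\ \geq\ \min\Bigl\{d_{P_{\sigma(1)}},\ d_{P_{\sigma(2)}}\,d(\mathbb{C}_{P_{\sigma(1)}}),\ \ldots,\ d\bigl(\textstyle\bigcap_{i}\mathbb{C}_{P_i}\bigr)\Bigr\},
\]
i.e.\ the formula \eqref{NewSpec} indexed by the permutation $\sigma$, not by the $d_P$-ordering the theorem asserts. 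These two expressions are genuinely different and the $\sigma$-ordered one can be strictly smaller. Concretely, take $s=2$ with $d_{P_1}=5$, $d_{P_2}=3$, true distances $d(D_{P_1})=6$, $d(D_{P_2})=10$, and $d(\mathbb{C}_{P_1})=d(\mathbb{C}_{P_2})=2$, $d(\mathbb{C}_{P_1}\cap\mathbb{C}_{P_2})=10$. The theorem claims $\min\{5,\,3\cdot 2,\,10\}=5$, while your procedure (which forces $\sigma(1)=2$) gives $\min\{3,\,5\cdot 2,\,10\}=3$. The replacement step is therefore too lossy: to land on \eqref{NewSpec} you would need a further argument comparing the Proposition's output in the true-distance ordering directly to the $d_P$-ordered bound, and that comparison is not supplied.
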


\begin{proof}
Note that, when $s=1$, we have the setup of Theorem \ref{main thm new}, which includes the cases of the trivial zero code and the full space code as well. Therefore, we can assume that the eigenvalue set $\overline{\Omega}$ is nonempty and $s\geq 2$.
For each $1\leq i\leq s$, let $P_i=\{\beta_1^{(i)},\cdots,\beta_{r_i}^{(i)}\}$ and let $V_{P_i}=\left(\mathbf{v}_1^{(i)},\cdots,\mathbf{v}_{n_i}^{(i)}\right)^{\top}$ be an $n_i\times \ell$ matrix whose rows are formed by a basis of $\mathcal{V}_{P_i}=\bigcap_{\beta\in P_i}\mathcal{V}_{\beta}$. For each $1\leq i\leq s$, we define the $r_i \times m$ matrix
\[
\widetilde{H}_{P_i}:=\begin{pmatrix}
1&  \beta_1^{(i)} & \cdots& \left(\beta_1^{(i)}\right)^{m-1}\\
\vdots& \vdots &  \ddots & \vdots\\
1&  \beta_{r_i}^{(i)} & \cdots& \left(\beta_{r_i}^{(i)}\right)^{m-1}
\end{pmatrix}
\]
We can verify that $\widetilde{H}_{P_i}$ is a parity-check matrix of some $\lambda$-constacyclic code $D_{P_i}$ with zero set $P_i$. The minimum distance of $D_{P_i}$ is greater than or equal to $d_{P_i}$. Then, the matrix
\[
\widehat{H}=\begin{pmatrix}
\widetilde{H}_{P_1}\otimes V_{P_1}\\
\vdots\\
\widetilde{H}_{P_s}\otimes V_{P_s}
\end{pmatrix}
\]
is a submatrix of a parity-check matrix of $C$ defined in \eqref{parity check matrix}. The code $\widehat{C}$ over $\F_q$ with parity-check matrix $\widehat{H}$ contains the code $C$, which implies that $d(C)\geq d(\widehat{C})$. The proof follows by Proposition \ref{bound prop} and the fact that $\left(\mathcal{V}_{P_1}+\cdots+\mathcal{V}_{P_j}\right)^\perp\bigr\rvert_{\Fq}=\bigcap_{i=1}^j\mathbb{C}_{P_i}$, for any $j\in\{1,\ldots,s\}$ (see Section 3 in \cite{ELLOT}).
\end{proof}

Let $C$ be a $\lambda$-QT code of length $m\ell$ and index $\ell$ with eigenvalue set $\overline{\Omega}$. Let$D_{\overline{\Omega}}$ be the $\lambda$-constacyclic code of length $m$ over $\F$ with zero set $\overline{\Omega}$ and let
\[
(P_1,d_{P_1}),\ldots,(P_s,d_{P_s})\in \mathcal{B}(D_{\overline{\Omega}})
\]
be some defining set bounds for $D_{\overline{\Omega}}$. Henceforth, we denote the estimate of the generalized spectral bound by
\[
d_{Spec}(\mathcal{B}(D_{\overline{\Omega}}); P_1,\ldots,P_s) := \min\left\{d_{P_1}, d_{P_2} \, d(\mathbb{C}_{P_1}), \cdots, d_{P_s} \, d\left(\bigcap_{i=1}^{s-1}\mathbb{C}_{P_i}\right), d\left(\bigcap_{i=1}^s\mathbb{C}_{P_i}\right)\right\},
\]
and we set
\begin{equation}\label{optbound}
d_{Spec}(\mathcal{B}(D_{\overline{\Omega}}),s):=\max_{\substack{(P_i,d_{P_i})\in \mathcal{B}(D_{\overline{\Omega}}) \\ 1\leq i \leq s}} \{d_{Spec}(\mathcal{B}(D_{\overline{\Omega}}); P_1,\ldots,P_s)\}.
\end{equation}
The generalized spectral bound $d_{Spec}(\mathcal{B}(D_{\overline{\Omega}}); P_1,\ldots,P_s)$ considers $s$ out of $|\mathcal{B}(D_{\overline{\Omega}})|$ defining set bounds and the optimized value $d_{Spec}(\mathcal{B}(D_{\overline{\Omega}}),s)$ maximizes over all $ \binom{|\mathcal{B}(D_{\overline{\Omega}})|}{s}$ possible outcomes.

\begin{rem}\label{remark1}
Theorem \ref{ISB} generalizes and improves Theorem \ref{main thm new} as the latter follows when $s=1$ in Theorem \ref{ISB}. Whenever $s>1$, since $d\left(\bigcap_{i=1}^s \mathbb{C}_{P_i}\right) \geq d\left(\bigcap_{i=1}^{s-1} \mathbb{C}_{P_i}\right) \geq \cdots \geq d(\mathbb{C}_{P_1} \cap \mathbb{C}_{P_2}) \geq d(\mathbb{C}_{P_1})$ and each $d_{P_i}\geq 1$, we obtain
\[
\min\left\{d_{P_1}, d_{P_2} \, d(\mathbb{C}_{P_1}),\cdots, d_{P_s} \, d\left(\bigcap_{i=1}^{s-1}\mathbb{C}_{P_i}\right), d \left(\bigcap_{i=1}^s \mathbb{C}_{P_i} \right)\right\}\geq \min\{d_{P_1}, d(\mathbb{C}_{P_1})\}.
\]

Assume that $(P_i,d_{P_i})=(P_j,d_{P_j})$, for some $1\leq i<j\leq s$. Without loss of generality, we can rearrange the list and assume that $j=i+1$. Since $d_{P_i}=d_{P_{i+1}}$ and $\mathbb{C}_{P_i}=\mathbb{C}_{P_{i+1}}$, we have $d_{P_i} \, d\left(\bigcap_{k=1}^{i}\mathbb{C}_{P_k}\right)=d_{P_{i+1}} \, d\left(\bigcap_{k=1}^{i+1}\mathbb{C}_{P_k}\right)$. Hence, considering an identical tuples $(P_i,d_{P_i})=(P_j,d_{P_j})$ among the $s$ terms is equivalent to considering the bound given in Theorem \ref{ISB} with $s-1$ terms. Therefore, the bound with $s$ terms, optimized as in \eqref{optbound}, already includes the respective estimates produced with $s-1$, $s-2$, $\cdots$, $2$ terms and a single term. We note that, in this repetition-free consideration, $s \leq |\mathcal{B}(D_{\overline{\Omega}})|$.
\end{rem}

\section{Examples and comparison results} \label{res sect}

First, we would like to emphasize that the spectral bound given in Theorem \ref{main thm new} cannot exceed $\max\{n,\ell\}$. If the given $\lambda$-QT code $C$ has the eigenvalue set $\overline{\Omega}=\Omega$ and we set $P_1=\Omega$ and the eigencodes satisfy $\bigcap_{i=1}^s\mathbb{C}_{P_i}=\{\mathbf{0}\}$, then the generalized spectral bound in Theorem \ref{ISB} might become greater than $\max\{n,\ell\}$, as the following example shows.

\begin{ex}\label{example1}
The ternary $[8,2,6]$ QT code with $\lambda=2$ and $(m,\ell)=(4,2)$, which is generated by 
\[
(a_{1,0}(x),a_{1,2}(x))=(2x^3 + 2x + 1, x^3 + x + 2),
\]
 is optimal (see www.codetables.de). Its upper-triangular generator matrix is
\[
\widetilde{G}(x) =
\begin{pmatrix}
x^2 + 2x + 2 & 2x^2 + x + 1 \\
0 & x^4+1
\end{pmatrix},
\]
with $\det(\widetilde{\mathbf{G}}(x))=(x^2 + 2x + 2)(x^4+1)$ and, hence, $\overline{\Omega}=\Omega$. Over $\F_3$, the scalar generator matrix is
\[
\setcounter{MaxMatrixCols}{20}
\begin{pmatrix}
1 & 0 & 1 & 1 & 2 & 0 & 2 & 2 \\
0 & 1 & 1 & 2 & 0 & 2 & 2 & 1
\end{pmatrix}.
\]
The generalized spectral bound sharply yields $d_{Spec}(\mathcal{B}(D_{\overline{\Omega}}),2)=6$, whereas, with $n=4$ and $\ell=2$, the spectral bound in Theorem \ref{main thm new} cannot exceed $4$. In fact, it yields $3$ as the estimate.
\end{ex}

In \cite{ELLOT}, the Jensen bound was shown to outperform the spectral bound in Theorem \ref{main thm new}, based on the simulation results. That spectral bound, in turn, is sharper than the Lally bound (see \cite{L2}). We now consider the performance of the generalized spectral bound in Theorem \ref{ISB} against the Jensen bound (denoted by $d_J$) and the spectral bound (denoted by $d_S$) in the QT setup. 

Given a $\lambda$-QT code $C$ of index $\ell$ with eigenvalue set $\overline{\Omega}\subseteq\Omega$, in the example below we take a subset $\widehat{\mathcal{B}}(D_{\overline{\Omega}})$ of the defining set bounds $\mathcal{B}(D_{\overline{\Omega}})$ consisting of four types of bounds, namely the BCH, HT, and Roos bounds, as well as $(P,d(D_{P})) \in D(\mathfrak{C})$, where $P$ is a subset of $\overline{\Omega}$ and $D_P$ is the corresponding $\lambda$-constacyclic code with zero set $P$, which contains $D_{\overline{\Omega}}$ as a subcode. In the search routines, we use the \texttt{MAGMA} functions for the Roos, HT, and BCH bounds from \cite{Piva2014}. The restricted estimate of the generalized spectral bound $d_{Spec}(\widehat{\mathcal{B}}(D_{\overline{\Omega}}),s)$ is calculated for $s=2$.

\begin{ex}\label{example2}
The $[16,8,3]_3$ QT code with $\lambda=2$ and $(m,\ell)=(4,4)$ generated by
\begin{align*}
(a_{1,0}(x), \ldots, a_{1,3}(x)) &=(x^3 + x^2 + 1, 2x^2 + x + 1, x^3 + 2x^2 + x, 2x^3 + 2x^2 + x + 2)\\
\mbox{ and } \\
(a_{2,0}(x), \ldots, a_{2,3}(x)) &=(x^3 + 2x + 1, 2x^3 + x^2 + x + 2, x^2 + x + 1, 2x^3 + 2x^2 + 2x)
\end{align*}
has an upper-triangular generator matrix
\[
\widetilde{G}(x) =
\begin{pmatrix}
1 &  2x^3 + 2x^2 + 2x + 2 &  0 &  2x^3 + 2x^2 + x\\
0 &  x^4 + 1 &  0 &  0\\
0 &  0 &  1 &  x^3 + 2x^2 + 2x + 2\\
[0 &  0 &  0 &  x^4 + 1
\end{pmatrix}.
\]
Since $\det(\widetilde{G}(x)) =(x^4+1)^2$, we infer that $\overline{\Omega}=\Omega$. Over $\F_3$, the scalar generator matrix is
\[
\setcounter{MaxMatrixCols}{20}
\begin{pmatrix}
1 & 0 & 0 & 0 & 2 & 2 & 2 & 2 & 0 & 0 & 0 & 0 & 0 & 1 & 2 & 2\\
0 & 1 & 0 & 0 & 1 & 2 & 2 & 2 & 0 & 0 & 0 & 0 & 1 & 0 & 1 & 2\\
0 & 0 & 1 & 0 & 1 & 1 & 2 & 2 & 0 & 0 & 0 & 0 & 1 & 1 & 0 & 1\\
0 & 0 & 0 & 1 & 1 & 1 & 1 & 2 & 0 & 0 & 0 & 0 & 2 & 1 & 1 & 0\\
0 & 0 & 0 & 0 & 0 & 0 & 0 & 0 & 1 & 0 & 0 & 0 & 2 & 2 & 2 & 1\\
0 & 0 & 0 & 0 & 0 & 0 & 0 & 0 & 0 & 1 & 0 & 0 & 2 & 2 & 2 & 2\\
0 & 0 & 0 & 0 & 0 & 0 & 0 & 0 & 0 & 0 & 1 & 0 & 1 & 2 & 2 & 2\\
0 & 0 & 0 & 0 & 0 & 0 & 0 & 0 & 0 & 0 & 0 & 1 & 1 & 1 & 2 & 2\\
\end{pmatrix}. \vspace{10pt}
\]
The generalized spectral bound in (\ref{optbound}) yields $d_{Spec}(\mathcal{B}(D_{\overline{\Omega}}),2)=3$, whereas the Jensen bound in (\ref{Jensen}) and the spectral bound in Theorem \ref{main thm new} provide the same estimate $d_J=d_S=2$. 
\end{ex}

Given a $\lambda$-QT code $C$ of index $\ell$ with eigenvalue set $\overline{\Omega}\subseteq\Omega$, the computation of the optimized generalized spectral bound with $s$ terms in \eqref{optbound} requires to consider $(2^{|\overline{\Omega}|}-1)^s$ possible $s$-tuples of nonempty subsets $P_1, \cdots, P_s \subseteq \overline{\Omega}$. Moreover, if $k$ types of defining set bounds are considered in $\mathcal{B}(D_{\overline{\Omega}})$, then $k^s$ possible $s$-tuples of defining set bounds have to regarded per choice of $P_1, \cdots, P_s \subseteq \overline{\Omega}$. In total, there are $k^s \left(2^{|\overline{\Omega}|}-1\right)^s$ comparisons to be made for each randomly generated QT code. In Table \ref{table:simulation} and in the examples above, we have therefore taken $k=4$ and choose $s \in \{2,3,4\}$ to make the computations practically possible. Overall, in order to make the bound computation more efficient than the brute force minimum weight search, the aforementioned number of comparisons should be kept much smaller than the number of nonzero codeword representatives $\dfrac{q^{\text{dim}(C)}-1}{q-1}=\dfrac{q^{n\ell - |\overline{\Omega}|}-1}{q-1}$.

Table \ref{table:simulation} presents the overall performances of the generalized spectral bound, the Jensen bound and the spectral bound in the ternary case with $\lambda=2$. Namely, we construct random ternary $2$-QT codes on each input tuple $(n,\ell,r)$, with $n = 4$ fixed, to keep the size of the eigenvalue set $|\overline{\Omega}|$ manageable, and use the ranges $2 \leq \ell \leq 4$ and $1 \leq r \leq \ell$. Once $(n,\ell,r)$ was determined, an array $\mathcal{A}$ of $r \ell$ generator polynomials was randomly built. The {\tt QuasiTwistedCyclicCode} function of \texttt{MAGMA} on input $(n \ell, \mathcal{A}, r)$ produced the corresponding $r$-generator quasi-cyclic code $C$. Then we computed its minimum distance $d(C)$ and the estimates given by the three bounds $d_{Spec}, d_J$ and $d_{S}$, where $d_{Spec} :=d_{Spec}(\widehat{\mathcal{B}}(D_{\overline{\Omega}}),s)$, with $s=2,3$ or $4$ and $\widehat{\mathcal{B}}(D_{\overline{\Omega}})$ consisted of the $k=4$ types of defining set bounds as mentioned earlier for Example \ref{example2}. We listed the respective numbers of occasions when each bound was either sharp or best-performing, with double counting allowed.

\begin{table*}[ht!]
\renewcommand{\arraystretch}{1.2}
\caption{The performance comparison of the minimum distance bounds in the ternary $2$-QT case. We list the number of nontrivial instances, when a specified bound reached the actual minimum distance and when it was greater than or equal to the other bound.}
\centering
\begin{tabular}{c|c|c|c|c|c}
	\hline
	$q=3$ &  $d_{Spec}, s=2$ &  $d_{Spec}, s=3$ &  $d_{Spec}, s=4$ & $d_J$ & $d_{S}$    \\
	\hline
	sharp & 42 & 45 & 45 & 35 & 32  \\
	best-performing & 112 & 120 & 120 & 108 & 73 \\
	\hline
	  \multicolumn{6}{c}{ \# nontrivial $C$ : 135}  \\
	\hline
\end{tabular}
\label{table:simulation}
\end{table*}

\section*{Acknowledgement}
The author would like to thank Markus Grassl for his help to run the simulations presented in Table \ref{table:simulation} above. The author is supported by T\"UB{\.I}TAK project 223N065.

\end{document}